\newcommand{\N}{\ensuremath{\mathbb{N}}}
\newcommand{\R}{\ensuremath{\mathbb{R}}}
\newcommand{\PP}{\ensuremath{\mathbb{P}}}
\newcommand{\E}{\ensuremath{\mathbb{E}}}
\newcommand{\event}{\ensuremath{F}}
\newtheorem{theorem}{Theorem}
\newtheorem{lemma}{Lemma}
\newtheorem{remark}{Remark}
\newtheorem{example}{Example}
\newtheorem{algorithm}{Algorithm}
\newtheorem{condition}{Condition}
\newtheorem{definition}{Definition}
\title{A Framework for Monte Carlo based Multiple Testing}
\author{Axel Gandy and Georg Hahn
  \\Department of Mathematics, Imperial College London}
\date{}
\begin{document}
\maketitle

\begin{abstract}
We are concerned with a situation in which we would like to test multiple hypotheses
with tests whose p-values cannot be computed explicitly but can be approximated using Monte Carlo simulation.
This scenario occurs widely in practice.
We are interested in obtaining the same rejections and non-rejections
as the ones obtained if the p-values for all hypotheses had been available.
The present article introduces a framework for this scenario
by providing a generic algorithm for a general multiple testing procedure.
We establish conditions which guarantee that the rejections and
non-rejections obtained through Monte Carlo simulations are identical to
the ones obtained with the p-values.
Our framework is applicable to
a general class of step-up and step-down procedures
which includes many established multiple testing corrections such as the ones of Bonferroni,
Holm, Sidak, Hochberg or Benjamini-Hochberg.
Moreover, we show how to use our framework to improve
algorithms available in the literature
in such a way as to yield theoretical guarantees on their results.
These modifications can easily be implemented in practice and
lead to a particular way of reporting multiple testing results as three sets
together with an error bound on their correctness,
demonstrated exemplarily using a real biological dataset.
\end{abstract}

\textit{Keywords:}
algorithm, framework, hypothesis testing, monte carlo, multiple testing procedure, p-value

\section{Introduction}
\label{section_introduction}
We would like to test $m$ hypotheses $H_{01}, \ldots, H_{0m}$
for statistical significance using a multiple testing procedure
given by a mapping
\begin{align}
h: [0,1]^m \times [0,1] \rightarrow \mathcal{P}(\{ 1,\ldots,m \})
\label{label_h}
\end{align}
which takes a vector of $m$ p-values $p \in [0,1]^m$ and a
threshold $\alpha \in [0,1]$ and returns the set of indices of
hypotheses to be rejected, where $\mathcal{P}$ denotes the power set.
As we consider control of both
the familywise error or the false discovery rate,
this procedure could, for instance,
be the \cite{Bonferroni1936} correction,
the \cite{Sidak1967} correction
or the procedures of \cite{Holm1979},
\cite{Hochberg1988} or \cite{Benjamini1995CFD}.

We assume that the p-values $p^\ast=(p_1^\ast,\ldots,p_m^\ast)$
of the underlying tests cannot be computed explicitly.
Moreover, the threshold $\alpha^\ast$ at which we would
like to test
may depend on the unknown $p^\ast$ and thus may be unknown itself.
Our aim is to compute $h(p^\ast,\alpha^\ast)$.

We assume that we can approximate $p^\ast$ and $\alpha^\ast$
through Monte Carlo simulations from a well defined null distribution.
Such Monte Carlo simulations are assumed to be carried out separately for each hypothesis.
We define \textit{Monte Carlo based multiple testing}
to be the evaluation of multiple hypotheses using a multiple testing correction
applied to $p^\ast$ and $\alpha^\ast$ approximated via Monte Carlo simulations.
Assuming a test statistic is available to test all hypotheses,
the Monte Carlo simulations considered in the present article
correspond to the simulation of independent datasets under the null
and to the evaluation of the test statistics on the simulated data.
A Monte Carlo p-value can then be computed as a proportion
of simulated test statistics exceeding the test statistic evaluated on the observed data
\citep{Sandve2011}.

Monte Carlo simulations are commonly carried out by
resampling the data in case of bootstrap tests or by
generating permutations when using permutation tests.
This scenario occurs widely in practical situations
\citep{Chen2013,Nusinow2012,Rahmatallah2012}.

More precisely, we assume that we can compute sequences of
nested confidence intervals $I_1^i \supseteq I_2^i \supseteq \cdots$,
$i = 1,\ldots,m$,
for $p_1^\ast,\ldots,p_m^\ast$
and $I_1^{m+1} \supseteq I_2^{m+1} \supseteq \cdots$
for $\alpha^\ast$.
For some of our results,
we require that these intervals $\left( I_n^i \right)_{i,n}$
have a positive joint coverage probability and that their lengths uniformly go to zero.

Being able to compute $h(p^\ast,\alpha^\ast)$ through Monte Carlo simulation
ensures the repeatability and objectivity of the test result \citep{GandyHahn2014}.
Moreover, all guarantees of $h$,
in particular the type I and type II error control, are valid (up to an error probability)
even when using approximations of $p^\ast$ and $\alpha^\ast$.

Existing methods
\citep{BesagClifford1991,Lin2005,Wieringen2008,GuoPedadda2008,Sandve2011}
return a set of rejected hypotheses which,
at least to a certain extent, is random,
where the randomness is coming from the Monte Carlo simulations
and not from the underlying data.
Consider the following example which will be revisited in Section \ref{section_use_practice}.
\cite{Sandve2011} use their method \texttt{MCFDR}
to classify a genome dataset
of \cite{Pekowska2010} with the aim to test if a
gene modification appears more often in certain gene regions.
Each gene region corresponds to one hypothesis.
\cite{Sandve2011} report $2747$ significant hypotheses out of $3466$ hypotheses
without providing guarantees on the stability of their finding.
Recomputing the decisions on all hypotheses shows
considerable variability:
around $353$ of the $3466$ hypotheses are randomly classified
in the sense that they switch from being rejected
to non-rejected in more than $1\%$ of all cases when repeatedly applying \texttt{MCFDR}.
Conclusions based on the significances of these genes should therefore be questioned.

In the present article we will show how algorithms such as \texttt{MCFDR}
can be modified to
give a guarantee on the stability of their findings
and thus how to reduce the Monte Carlo randomness in their results.
These guarantees are conditional on the data on which the testing is carried out.
Although in principle,
increased stability can be achieved by augmenting the number of Monte Carlo samples,
merely increasing the number of samples
does not provide a guarantee on the decision of each hypothesis.
The guarantees provided in the present article
are effective for any number of samples.

The contribution of the article is threefold.
First, Section \ref{section_framework}
provides a framework for multiple hypothesis testing
under the assumption that p-values are not available
and thus have to be approximated using Monte Carlo methods.
The framework is phrased as a generic algorithm
which, under conditions, computes sub- and supersets of $h(p^\ast,\alpha^\ast)$
that converge to $h(p^\ast,\alpha^\ast)$
(Lemma \ref{lemma_monotonicity}
and Theorem \ref{theorem_convergence}
in Section \ref{subsection_framework_convergenceresults}).
Theoretical bounds on the control of the false discovery rate (fdr) are provided
for the proposed algorithm (Lemma \ref{lemma_fdr} in Section \ref{subsection_framework_convergenceresults})
given it is used in connection with a suitable procedure $h$ controlling the fdr.
The framework also incorporates multiple
testing at a (possibly unknown) corrected testing threshold,
for instance using an estimate of the proportion of true null hypotheses.

Second, we show how to use the framework to modify established algorithms
in such a way as to provide certain proven guarantees on their test results
(Section \ref{section_improving}).

Third, we simplify the condition on the multiple testing procedure
in Section \ref{section_procedures},
yielding an easy-to-check criterion for an arbitrary step-up or step-down procedure
(Section \ref{subsection_procedures_stepupstepdown}).
We then use the simplified criterion to show
that many widely used procedures can be employed in our framework
(Section \ref{subsection_procedures_examples}).

One specific implementation of our generic algorithm
is the \texttt{MMCTest} algorithm of \cite{GandyHahn2014}.
\texttt{MMCTest} uses similar results to the ones in
this article
to prove the correctness of its test result up to a pre-specified error probability.
In contrast to the present article
which presents results for a generic algorithm and a generic multiple
testing procedure,
\texttt{MMCTest} focuses on one specific implementation only
as well as on the two specific multiple testing procedures
of \cite{Bonferroni1936} and \cite{Benjamini1995CFD}.
\cite{GandyHahn2014} do not prove
that a correct test result can also be obtained
through appropiate modifications of existing methods.
Moreover, hypothesis testing at a variable testing threshold
is not possible with \texttt{MMCTest},
theoretical bounds on the control of the false discovery rate are not given,
and \cite{GandyHahn2014}
do not provide a simple criterion to prove whether an arbitrary step-up or step-down
procedure allows one to classify hypotheses without knowledge of the p-values.

In Section \ref{section_use_practice} we pick up our
discussion of the biological dataset of \cite{Pekowska2010}.
We show that our proposed
modifications can easily be implemented in practice, come
at virtually no additional computational cost and lead to
a certain way of reporting multiple testing results as three sets together
with an error bound on their correctness.

The article concludes with a discussion in Section \ref{section_discussion}.
All proofs are included in the Appendix.

Throughout the article, let $|\cdot|$ denote the length of an interval
or the size of a set.
Let $\| \cdot \|$ denote the Euclidean norm.
For an interval $I \subset \R$, let $\min I$ and $\max I$
denote its lower and upper limit, respectively.
For any set $S \subseteq \{ 1,\ldots,m \}$,
where $m \in \N$, let $S^c$ denote the complement of $S$ with respect
to $\{ 1,\ldots,m \}$.
We abbreviate $(x_1,\ldots,x_n)$ by $x_{1:n}$, where $x_{1:0}=\emptyset$.

\section{The framework}
\label{section_framework}
Our framework includes two components, the multiple testing procedure $h$
and a generic algorithm presented in Section \ref{subsection_framework_algorithm}.
Section \ref{subsection_improved_naive} discusses a concrete implementation
of our generic algorithm with the aim to exemplarily improve an established method.
Combining both the testing procedure and the algorithm
yields a framework which, under conditions,
guarantees the correctness of its test result
(Section \ref{subsection_framework_convergenceresults}).

\subsection{The generic algorithm}
\label{subsection_framework_algorithm}
We propose to use the following
generic sequential algorithm to draw samples for each hypothesis.
As the p-values are unknown,
in each iteration $n$, the generic algorithm
computes intervals $I_n^i$ for each $p^\ast_i$, $i \in \{ 1,\ldots,m \}$,
as well as an interval $I^{m+1}_n$ for $\alpha^\ast$.
Usually, these intervals will be confidence intervals, in which case
our algorithm will compute sub- and supersets of $h(p^\ast,\alpha^\ast)$
(Section \ref{subsection_framework_convergenceresults}).

Although the threshold will be, in most cases, a function $\alpha^\ast=g(p^\ast)$
of the p-values $p^\ast$ (see Section \ref{section_improving}),
it is sensible to not restrict the multiple testing procedure to $h(p^\ast)=h(p^\ast,g(p^\ast))$
and to keep a separate interval $I^{m+1}_n$ for $\alpha^\ast$ instead:
Naturally, one could use confidence bounds on $p^\ast$ to obtain a plug-in interval for $\alpha^\ast$
(provided that $g$ is monotonic).
However, Example \ref{example_hoeffding} demonstrates that
for the testing threshold of \cite{PoundsCheng2006},
Hoeffding's inequality \citep{Hoeffding1963}
allows one to construct a tighter confidence interval for $\alpha^\ast$ than the plug-in interval,
thus yielding a faster convergence to $h(p^\ast,\alpha^\ast)$
as well as considerably more decisions on individual hypotheses in a real-data study (Section \ref{section_use_practice}).

The generic algorithm draws Monte Carlo samples in each
iteration $n$, denoted by the observations $O_n$.
These are typically sets of samples drawn for all hypotheses or for a subset of the hypotheses.
The decision for which hypotheses to sample new observations may depend on the history
of observations drawn up to iteration $n-1$.
For instance,
in Section \ref{subsection_improved_naive}
which considers the evaluation of multiple tests based on a test statistic, each observation $O_n$
is a vector of indicators signaling if the test statistic evaluated on the $n^{th}$ sample
drawn for each of the $m$ hypothesis exceeds the observed test statistic.

Each (confidence) interval $I_n^i$
is computed by a function $F_i$, $i \in \{ 1,\ldots,m+1 \}$,
using the current history of observations $O_{1:n}$, $n \in \N$.
For generality, we do not impose that $F_i$ computes any specific type
of confidence interval.
Intersecting the intervals in Algorithm \ref{algorithm_generic}
produces a nested sequence of $\left( I_n^i \right)_{n \in \N}$
for each $i \in \{ 1,\ldots,m+1 \}$.

\begin{algorithm}[Generic algorithm]
\label{algorithm_generic}~
\begin{center}
  \begin{tabular}{ll}
    \multicolumn{2}{l}{$\underline{A}_0 = \emptyset, \overline{A}_0 = \{ 1,\ldots,m \},
    I_0^i = [0,1], i \in \{ 1,\ldots,m \}, I_0^{m+1} = \R.$}\\
    For $n=1,2,\ldots:$
    & \text{Choose which $O_n$ to sample based on $O_{1:n-1}$,}\\
    & \text{Sample $O_n$},\\
    & $I_n^i = F_i(O_{1:n}) \cap I_{n-1}^i, i \in \{ 1,\ldots,m+1 \}$,\\
    & $\overline{A}_n = h((\min I_n^i)_{i \in \{1,\ldots,m\}}, \max I_n^{m+1}),$\\
    & $\underline{A}_n = h((\max I_n^i)_{i \in \{1,\ldots,m\}}, \min I_n^{m+1}).$
\end{tabular}
\end{center}
\end{algorithm}
In each iteration $n$,
Algorithm \ref{algorithm_generic} uses the history of samples observed
up to iteration $n-1$ to determine a new set of observations $O_n$ to be sampled.
The key idea of Algorithm \ref{algorithm_generic} is to apply
the multiple testing procedure $h$ to lower ($\min I_n^i$)
and upper ($\max I_n^i$) confidence limits of the $\left( I_n^i \right)_{n \in \N}$, $i \in \{ 1,\ldots,m \}$.
This yields two sets $\overline{A}_n$ and $\underline{A}_n$.

\subsection{The improved na\"ive method}
\label{subsection_improved_naive}
A widely used method in practice to estimate $h(p^\ast,\alpha^\ast)$
is to draw a constant number of samples $s$ for each hypothesis $H_{0i}$, where $i \in \{ 1,\ldots,m \}$,
then compute a point estimate of each p-value
and classify all hypotheses at a constant threshold $\alpha^\ast$
based on these point estimates
\citep{Nusinow2012,Gusenleitner2012,Rahmatallah2012,
Zhou2013,Li2012a,Cohen2012}.
We will call this the na\"ive method.
The na\"ive method can be applied to any multiple testing procedure $h$.

In the following we present
an improvement of the na\"ive method by stating a concrete implementation
of Algorithm \ref{algorithm_generic}.
As shown in Section \ref{section_improving},
under conditions on $h$,
the sets $\underline{A}_n$ ($\overline{A}_n$)
defined in Algorithm \ref{algorithm_generic}
will be subsets (supersets) of $h(p^\ast,\alpha^\ast)$
in each iteration $n$ of our improved na\"ive method
up to a pre-specified error probability.

One key ingredient of the improved na\"ive method are the confidence
sequences given in \cite{Lai1976}: for independent
$Y_1,Y_2,\ldots \sim \text{Bernoulli}(p)$,
$$\PP(g_n^\beta(S_n) < p < f_n^\beta(S_n) ~\forall n \geq 1) \geq 1-\beta,$$
where $S_n = \sum_{i=1}^n Y_i$ and
$g_n^\beta(x) < f_n^\beta(x)$ are the two distinct \citep{Lai1976} roots of
$(n+1) \binom{n}{x} p^x (1-p)^{n-x} = \beta$ for a given $\beta \in (0,1)$.

In many applications of the na\"ive method, multiple tests are based
on a test statistic and it is possible to sample under the null hypothesis.
Let $X_n^i=1$ if the test statistic evaluated on the $n^{th}$ sample
drawn for hypothesis $H_{0i}$
exceeds the observed test statistic, otherwise $X_n^i=0$.
For our improved na\"ive method, we draw one new sample per hypothesis
in each iteration $n$.

The improved na\"ive method is obtained by defining
\begin{align*}
O_n &= (X_n^1,\ldots,X_n^m),\\
F_i(O) &= \left[ g_{|O|}^\beta \left( \sum_{j=1}^{|O|} O_j^i \right),
f_{|O|}^\beta \left( \sum_{j=1}^{|O|} O_j^i \right) \right],~i=1,\ldots,m,\\
F_{m+1}(O) &= \{ \alpha^\ast \},
\end{align*}
where $O_j^i=X_j^i$ and $|O_{1:n}|=n$.

Although the above method is open-ended,
we usually stop the improved na\"ive method
after a pre-specified total number of iterations $s$.
In this case, solely the two test results in
$\underline{A}_s$ and $\overline{A}_s$
based on the intervals of the last iteration
will be returned as result of the algorithm.

In the improved na\"ive method,
the testing threshold $\alpha^\ast$ is assumed to be constant.
However, the interval $F_{m+1}(O_{1:n})$ for $\alpha^\ast$
is needed if $\alpha^\ast$ depends on $p^\ast$.
For instance, this is the case
for thresholds depending
on an estimate of the proportion of true null hypotheses
which is usually a functional of $p^\ast$.
Using such an estimated threshold potentially results in more significant hypotheses
which is desired in practice.

Starting with the work of
\cite{SchwederSpjotvoll1982},
many authors have investigated estimators of the proportion of true
null hypotheses, such as
\cite{Storey2002},
\cite{Langaas2005},
\cite{PoundsCheng2006},
\cite{Finner2009}
and
\cite{FriguetCauseur2011}.

\subsection{Convergence results}
\label{subsection_framework_convergenceresults}
This section states our main results for which
we need the following monotonicity property:
\begin{definition}
\label{definition_monotonicity}
$h$ is \textit{monotonic} if
$h(p,\alpha) \subseteq h(q,\alpha')$
for $p \geq q$ and $\alpha \leq \alpha'$.
\end{definition}
A multiple testing procedure is thus monotonic if smaller p-values
\citep[as introduced in][]{TamhaneLiu2008}
or a higher testing threshold \citep[see][]{Roth1999} lead to more rejections.
The estimators of the proportion of true null hypotheses listed in the last paragraph
of Section \ref{subsection_framework_algorithm} all depend on the p-values $p^\ast$ only
(and some tuning parameters)
and are monotonically increasing in $p^\ast$.
When combined with a multiple testing procedure they thus preserve the monotonicity in the threshold argument.

Suppose in each iteration $n \in \N$,
each p-value $p_i^\ast$ is contained in its
interval $F_i(O_{1:n})$, $i \in \{ 1,\ldots, m \}$,
and the testing threshold $\alpha^\ast$
is contained in the interval $F_{m+1}(O_{1:n})$,
expressed as the event
$$\event_1 = \left\{ \alpha^\ast \in F_{m+1}(O_{1:n}), p_i^\ast \in F_i(O_{1:n}) ~\forall i \in \{ 1,\ldots,m \}, n \in \N \right\}.$$
The following lemma shows that on the event $\event_1$,
classifying hypotheses based on upper and lower interval bounds allows
Algorithm \ref{algorithm_generic} to compute sub- and supersets of $h(p^\ast,\alpha^\ast)$
for monotonic multiple testing procedures $h$.

\begin{lemma}
\label{lemma_monotonicity}
Let $h$ be a monotonic multiple testing procedure. Then,
\begin{enumerate}
    \item $\underline{A}_n \nearrow$ and $\overline{A}_n \searrow$ as $n \rightarrow \infty$,
    \item $\underline{A}_n \subseteq h(p^\ast,\alpha^\ast) \subseteq \overline{A}_n$ $\forall n \in \N$
    on the event $\event_1$.
\end{enumerate}
\end{lemma}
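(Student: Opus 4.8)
The plan is to exploit two facts about the intervals $I_n^i$ produced in Algorithm \ref{algorithm_generic}: that they are nested by construction, and that on $\event_1$ they contain the true quantities. The two claims of the lemma then follow by feeding these monotone/sandwiched inputs into the monotonic map $h$.

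First I would establish monotonicity of the interval endpoints. Since $I_n^i = F_i(O_{1:n}) \cap I_{n-1}^i \subseteq I_{n-1}^i$ for every $i \in \{1,\ldots,m+1\}$ and every $n$, the sequence $(I_n^i)_{n}$ is nested decreasing. Hence $\min I_n^i$ is nondecreasing in $n$ and $\max I_n^i$ is nonincreasing in $n$, for each $i$. Now apply Definition \ref{definition_monotonicity}: for the lower set, going from iteration $n$ to $n+1$ we have $(\max I_{n+1}^i)_{i \le m} \le (\max I_n^i)_{i \le m}$ (componentwise, smaller p-values) and $\min I_{n+1}^{m+1} \ge \min I_n^{m+1}$ (larger threshold), so monotonicity of $h$ gives $\underline{A}_n = h((\max I_n^i)_{i\le m}, \min I_n^{m+1}) \subseteq h((\max I_{n+1}^i)_{i\le m}, \min I_{n+1}^{m+1}) = \underline{A}_{n+1}$, i.e.\ $\underline{A}_n \nearrow$. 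Symmetrically, $(\min I_{n+1}^i)_{i\le m} \ge (\min I_n^i)_{i\le m}$ and $\max I_{n+1}^{m+1} \le \max I_n^{m+1}$, so $\overline{A}_{n+1} \subseteq \overline{A}_n$, giving $\overline{A}_n \searrow$. This proves part 1, and it requires only monotonicity of $h$, not the event $\event_1$.

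For part 2, work on $\event_1$. By definition of $\event_1$, for every $n$ and every $i \le m$ we have $p_i^\ast \in F_i(O_{1:n})$; combined with $p_i^\ast \in I_{n-1}^i$ (an easy induction, using $I_0^i = [0,1]$ as the base case) this yields $p_i^\ast \in I_n^i$, hence $\min I_n^i \le p_i^\ast \le \max I_n^i$. Likewise $\alpha^\ast \in F_{m+1}(O_{1:n})$ and the induction give $\alpha^\ast \in I_n^{m+1}$, so $\min I_n^{m+1} \le \alpha^\ast \le \max I_n^{m+1}$. Now feed these inequalities into $h$: since $(\max I_n^i)_{i\le m} \ge p^\ast$ componentwise and $\min I_n^{m+1} \le \alpha^\ast$, monotonicity gives $\underline{A}_n = h((\max I_n^i)_{i\le m}, \min I_n^{m+1}) \subseteq h(p^\ast, \alpha^\ast)$; and since $(\min I_n^i)_{i\le m} \le p^\ast$ componentwise and $\max I_n^{m+1} \ge \alpha^\ast$, we get $h(p^\ast,\alpha^\ast) \subseteq h((\min I_n^i)_{i\le m}, \max I_n^{m+1}) = \overline{A}_n$. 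Together these give $\underline{A}_n \subseteq h(p^\ast,\alpha^\ast) \subseteq \overline{A}_n$ for all $n$ on $\event_1$.

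There is no real obstacle here; the only thing to be careful about is the bookkeeping around the direction of the inequalities when applying Definition \ref{definition_monotonicity} (smaller p-values \emph{and} larger threshold both enlarge the rejection set), and the trivial induction that the running intersections keep containing $p^\ast$ and $\alpha^\ast$ — which is immediate from the initialization $I_0^i=[0,1]$, $I_0^{m+1}=\R$. The argument uses nothing about the specific form of the $F_i$ (in particular nothing about confidence levels or shrinking lengths), only the nestedness built into the algorithm and the two defining properties of $\event_1$.
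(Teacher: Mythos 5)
Your proposal is correct and follows essentially the same route as the paper's proof: nestedness of the intervals gives monotone endpoint sequences, which combined with the monotonicity of $h$ yields part 1, and on $\event_1$ the sandwiching $\underline{p}_n \leq p^\ast \leq \overline{p}_n$, $\underline{\alpha}_n \leq \alpha^\ast \leq \overline{\alpha}_n$ plus monotonicity yields part 2. The only difference is that you spell out the small induction showing $p_i^\ast \in I_n^i$, which the paper simply asserts.
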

The first part of Lemma \ref{lemma_monotonicity} is not dependent on
the event $\event_1$. It follows purely from the construction
of Algorithm \ref{algorithm_generic} which computes nested intervals
for each $p_i^\ast$, $i \in \{ 1,\ldots, m \}$.
The second part of Lemma \ref{lemma_monotonicity} shows that on $\event_1$, in any iteration $n$,
all the hypotheses in the set
$\underline{A}_n$ ($\overline{A}_n^c$) can already be classified as being
rejected (non-rejected).

Additional properties of Algorithm \ref{algorithm_generic}
can be derived for any monotonic multiple testing procedure $h$ and choice of
$p^\ast$, $\alpha^\ast$ which satisfy the following condition.

\begin{condition}
\label{condition_h}
\begin{enumerate}
  \item Let $p,q \in [0,1]^m$ and $\alpha \in \R$.
If $q_i \leq p_i$ $\forall i \in h(p,\alpha)$ and $q_i \geq p_i$ $\forall i \notin h(p,\alpha)$,
then $h(p,\alpha) = h(q,\alpha)$.
  \item There exists $\delta>0$ such that $p \in [0,1]^m$, $\alpha \in [0,1]$ and
$\| p-p^\ast \| \vee |\alpha-\alpha^\ast| < \delta$ imply $h(p,\alpha)=h(p^\ast,\alpha^\ast)$.
\end{enumerate}
\end{condition}

Condition \ref{condition_h} ensures that lowering (increasing) the p-value
of any rejected (non-rejected) hypothesis does not affect the result of $h$.
Moreover, we require that there
exists a neighborhood of $p^\ast$ and $\alpha^\ast$
on which $h$ is constant.
In Section \ref{section_procedures} we will simplify Condition \ref{condition_h}
for so-called step-up and step-down procedures.

We will call a monotonic multiple testing procedure $h$ \textit{well-behaved}
for $p^\ast$ and $\alpha^\ast$ if it satisfies Condition \ref{condition_h}.
The multiple testing procedures we consider in this article,
such as the common procedures of
\cite{Bonferroni1936}, \cite{Sidak1967}, \cite{Holm1979}, \cite{Hochberg1988}
or the one of \cite{Benjamini1995CFD}
are well-behaved
for all but a null set of $p^\ast$ and $\alpha^\ast$ (with respect to the Lebesgue measure).
A (non-exhaustive) list of well-behaved procedures can be found in Section \ref{subsection_procedures_examples}.

A second condition is necessary to obtain convergence of the two
bounds $\underline{A}_n$ and $\overline{A}_n$
established in Lemma \ref{lemma_monotonicity}
to $h(p^\ast,\alpha^\ast)$ as $n \rightarrow \infty$.
Whereas on the event $\event_1$, all hypotheses in $\underline{A}_n$ ($\overline{A}_n^c$)
can already be rejected (non-rejected),
we additionally require that
the length of each interval
belonging to a yet undecided hypothesis in the set
$\overline{A}_n \setminus \underline{A}_n$
or to the threshold goes to zero:
$$\event_2 = \left\{ \max \{ |F_i(O_{1:n})|: i \in \overline{A}_n \setminus \underline{A}_n \cup \{m+1\} \}
\rightarrow 0 \text{ as } n \rightarrow \infty \right\}.$$
The following theorem improves upon Lemma \ref{lemma_monotonicity}
on the more restrictive event $\event=\event_1 \cap \event_2$:

\begin{theorem}
\label{theorem_convergence}
Let $h$ be a well-behaved multiple testing procedure for $p^\ast$ and $\alpha^\ast$.
On the event $\event$, both sequences
$(\underline{A}_n)_{n \in \N}$ and $(\overline{A}_n)_{n \in \N}$
converge to $h(p^\ast,\alpha^\ast)$,
i.e.\ there exists $n_0 \in \N$ such that
$\underline{A}_n = h(p^\ast,\alpha^\ast) = \overline{A}_n$
$\forall n \geq n_0$.
\end{theorem}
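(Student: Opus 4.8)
The plan is to work on the event $\event = \event_1 \cap \event_2$ and to exhibit a finite $n_0$ after which the two sequences have stabilised at $h(p^\ast,\alpha^\ast)$. By Lemma \ref{lemma_monotonicity}, on $\event_1$ we already have the sandwich $\underline{A}_n \subseteq h(p^\ast,\alpha^\ast) \subseteq \overline{A}_n$ for every $n$, together with monotonicity $\underline{A}_n \nearrow$ and $\overline{A}_n \searrow$. So it suffices to show that on $\event$ the ``gap'' $\overline{A}_n \setminus \underline{A}_n$ is empty for all large $n$; once the gap is empty at some $n_0$, the sandwich forces $\underline{A}_{n_0} = h(p^\ast,\alpha^\ast) = \overline{A}_{n_0}$, and monotonicity together with the sandwich propagates this equality to all $n \ge n_0$.

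First I would bring in the neighbourhood supplied by part 2 of Condition \ref{condition_h}: there is $\delta > 0$ such that any $(p,\alpha) \in [0,1]^m \times [0,1]$ with $\|p - p^\ast\| \vee |\alpha - \alpha^\ast| < \delta$ satisfies $h(p,\alpha) = h(p^\ast,\alpha^\ast)$. The idea is to use $\event_2$ to make the relevant interval lengths smaller than (a suitable fraction of) $\delta$. Concretely, for indices $i \in \underline{A}_n$ we already know the hypothesis is decided, and similarly for $i \in \overline{A}_n^c$; the only coordinates where $\min I_n^i$ and $\max I_n^i$ may still differ are those in the gap $\overline{A}_n \setminus \underline{A}_n$, plus the threshold index $m+1$. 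On $\event_2$, $\max\{|F_i(O_{1:n})| : i \in \overline{A}_n \setminus \underline{A}_n \cup \{m+1\}\} \to 0$, and since $I_n^i \subseteq F_i(O_{1:n})$ (intersection with earlier intervals only shrinks them), the same holds for $|I_n^i|$ on those coordinates. Using $\event_1$ (so $p_i^\ast \in I_n^i$ and $\alpha^\ast \in I_n^{m+1}$), both $\min I_n^i$ and $\max I_n^i$ lie within $|I_n^i|$ of $p_i^\ast$ on gap coordinates and similarly for the threshold, while on the already-decided coordinates I would invoke part 1 of Condition \ref{condition_h} to argue that replacing $p_i^\ast$ by $\min I_n^i \le p_i^\ast$ (for $i \in h(p^\ast,\alpha^\ast)$, which on $\event_1$ contains $\underline{A}_n$) or by $\max I_n^i \ge p_i^\ast$ (for $i \notin h(p^\ast,\alpha^\ast)$) does not change the value of $h$.

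Putting this together: choose $n_0$ so large that for all $n \ge n_0$ the gap/threshold interval lengths are below $\delta / \sqrt{m+1}$ (or any convenient bound making the Euclidean perturbation on the undecided coordinates less than $\delta$). For such $n$, construct the perturbed p-value vector that equals $\min I_n^i$ (resp.\ $\max I_n^i$) on the coordinates already pinned down by $\underline{A}_n$ (resp.\ $\overline{A}_n^c$) and equals $p_i^\ast$ on the gap; by part 1 of Condition \ref{condition_h} this vector, paired with $\alpha^\ast$, still yields $h(p^\ast,\alpha^\ast)$, and then the $\event_2$-driven smallness on the gap and threshold lets part 2 of Condition \ref{condition_h} absorb the remaining discrepancy, giving $\overline{A}_n = h(p^\ast,\alpha^\ast) = \underline{A}_n$. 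The main obstacle I anticipate is the bookkeeping in the previous step — carefully separating coordinates into ``decided via part 1'' versus ``small via $\event_2$ and part 2'', and verifying that the hybrid vector one feeds into $h$ genuinely satisfies the hypotheses of both parts of Condition \ref{condition_h} simultaneously; in particular one must check that on $\event_1$ the inclusion $\underline{A}_n \subseteq h(p^\ast,\alpha^\ast) \subseteq \overline{A}_n$ (from Lemma \ref{lemma_monotonicity}) is exactly what licenses the sign conditions $q_i \le p^\ast_i$ on rejected and $q_i \ge p^\ast_i$ on non-rejected indices. Once that is in place the convergence to a common finite-time limit is immediate.
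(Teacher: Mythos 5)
Your high-level strategy matches the paper's: use Lemma \ref{lemma_monotonicity} for the sandwich and monotonicity, handle the already-decided coordinates with part 1 of Condition \ref{condition_h}, and let the event $\event_2$ shrink the gap and threshold intervals so that part 2 can finish the job. However, the key step --- the construction of the intermediate vector and the order in which the two parts of Condition \ref{condition_h} are applied --- does not go through as you describe. Your hybrid vector $q$ is ``inside out'': it places the interval endpoints on the decided coordinates ($\min I_n^i$ on $\underline{A}_n$, $\max I_n^i$ on $\overline{A}_n^c$) and $p_i^\ast$ on the gap. Part 1 applied with reference $(p^\ast,\alpha^\ast)$ does give $h(q,\alpha^\ast)=h(p^\ast,\alpha^\ast)$, but this gets you nothing usable afterwards: (i) part 2 only guarantees constancy of $h$ on a $\delta$-neighbourhood of $(p^\ast,\alpha^\ast)$, and $q$ is \emph{not} within $\delta$ of $p^\ast$, since the intervals on the decided coordinates are not required to shrink under $\event_2$; (ii) the discrepancy between $q$ and the points actually fed to $h$ by the algorithm, namely $\bigl((\min I_n^i)_i,\overline{\alpha}_n\bigr)$ and $\bigl((\max I_n^i)_i,\underline{\alpha}_n\bigr)$, is a full interval length on $\overline{A}_n^c$ (resp.\ $\underline{A}_n$), again not small; and (iii) $q$ is not reachable from $(\min I_n^i)_i$ via part 1 either, because that would require raising the gap coordinates from $\min I_n^i$ to $p_i^\ast$ while they are rejected under $(\min I_n^i)_i$ (they lie in $\overline{A}_n$), and part 1 forbids raising rejected coordinates; applying part 1 with the target as reference fails too, since membership of the gap coordinates in its rejection set is exactly what is unknown. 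Note also that part 1 holds $\alpha$ fixed, so the passage from $\overline{\alpha}_n$ to $\alpha^\ast$ must be made by part 2, which again forces the p-value vector at that moment to lie within $\delta$ of $p^\ast$.

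The paper's proof fixes precisely this by building the intermediate vector the other way around: it keeps the interval endpoints on the \emph{gap} coordinates $B_n=\overline{A}_n\setminus\underline{A}_n$ and puts $p_i^\ast$ on all decided coordinates (its $p^{(3)}$ and $p^{(5)}$). Such a vector is reachable from $(\min I_n^i)_i$ by two successive applications of part 1 with carefully chosen reference points (first raise the coordinates outside $\overline{A}_n$, using that they are non-rejected under $(\min I_n^i)_i$; then raise those in $\underline{A}_n$, using monotonicity to check they are rejected under the target), and it lies within $\delta$ of $p^\ast$ because the only coordinates differing from $p^\ast$ are the gap ones, whose interval lengths are controlled by $\event_2$. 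Part 2 then carries both the p-value perturbation on the gap and the threshold perturbation simultaneously. To repair your argument you would need to replace your hybrid by this kind of interpolant and split the part-1 step into two applications with distinct reference points; as written, the ``part 2 absorbs the remaining discrepancy'' step fails.
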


In the next section, we will use Lemma \ref{lemma_monotonicity} and
Theorem \ref{theorem_convergence} to establish
guarantees on the test result of existing algorithms.

Suppose Algorithm \ref{algorithm_generic} is used in connection with a
well-behaved multiple testing procedure controlling the familywise error rate (fwer).
Then at any stage, the fwer is also controlled for
all the rejections in $\underline{A}_n \subseteq h(p^\ast,\alpha^\ast)$.
This is easily proven using Boole's inequality.

A similar statement, however, is not true for
well-behaved multiple testing procedures controlling the false discovery rate (fdr).
Although the fdr is not generally controlled for subsets
$\underline{A}_n \subseteq h(p^\ast,\alpha^\ast)$
or supersets $\overline{A}_n \supseteq h(p^\ast,\alpha^\ast)$,
the following guarantees hold
if Algorithm \ref{algorithm_generic} is run with suitable stopping times.

\begin{lemma}
\label{lemma_fdr}
Let $h$ control the fdr at level $\alpha$, let $\underline{V}_n$ ($\overline{V}_n$)
be the set of rejected true null hypotheses in
$\underline{A}_n$ ($\overline{A}_n$) for $n \in \N$
and let $\eta \geq 1$, $\xi \geq 0$.
\begin{enumerate}
  \item $\E \left( |\underline{V}_s|/|\underline{A}_s| \right) \leq \eta \alpha$ for the stopping time
  $s=\min \{n \in \N: |\overline{A}_n|/|\underline{A}_n| \leq \eta \}$.
  \item $\E \left( |\overline{V}_t|/|\overline{A}_t| \right) \leq \alpha + \xi$ for
  $t=\min \{n \in \N: (|\overline{A}_n|-|\underline{A}_n|)/|\overline{A}_n| \leq \xi \}$.
\end{enumerate}
\end{lemma}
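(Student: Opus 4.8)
The plan is to derive, pointwise on the event $\event_1$, a deterministic bound on the false discovery proportion of $\underline{A}_s$ (respectively $\overline{A}_t$) in terms of that of $h(p^\ast,\alpha^\ast)$, and then to take expectations and invoke the fdr control of $h$. Write $V^\ast$ for the set of rejected true null hypotheses in $h(p^\ast,\alpha^\ast)$, so that fdr control of $h$ reads $\E\bigl(|V^\ast|/|h(p^\ast,\alpha^\ast)|\bigr)\le\alpha$, with the convention $0/0=0$. By the second part of Lemma \ref{lemma_monotonicity}, on $\event_1$ one has $\underline{A}_n\subseteq h(p^\ast,\alpha^\ast)\subseteq\overline{A}_n$ for all $n$, and I would use these inclusions throughout, combined with the defining inequalities of the two stopping times.

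For the first part, I would first observe that $\underline{A}_s\subseteq h(p^\ast,\alpha^\ast)$ forces $\underline{V}_s\subseteq V^\ast$, hence $|\underline{V}_s|\le|V^\ast|$. Next, from $h(p^\ast,\alpha^\ast)\subseteq\overline{A}_s$ and the stopping rule $|\overline{A}_s|\le\eta|\underline{A}_s|$ one gets $|h(p^\ast,\alpha^\ast)|\le\eta|\underline{A}_s|$, i.e.\ $|\underline{A}_s|\ge|h(p^\ast,\alpha^\ast)|/\eta$. Combining the two bounds gives
\[
\frac{|\underline{V}_s|}{|\underline{A}_s|}\ \le\ \eta\,\frac{|V^\ast|}{|h(p^\ast,\alpha^\ast)|}
\]
pointwise on $\event_1$ (the degenerate case $\underline{A}_s=\emptyset$ forces $\overline{A}_s=\emptyset$ via the stopping rule, hence $h(p^\ast,\alpha^\ast)=\emptyset$, and both sides vanish). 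Taking expectations and applying the fdr control of $h$ then yields $\E(|\underline{V}_s|/|\underline{A}_s|)\le\eta\alpha$.

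For the second part, I would decompose $\overline{A}_t$ into the disjoint union of $\underline{A}_t$ and $\overline{A}_t\setminus\underline{A}_t$. As above, the true nulls in $\underline{A}_t$ form a subset of $V^\ast$, so $|\overline{V}_t|\le|V^\ast|+|\overline{A}_t\setminus\underline{A}_t|=|V^\ast|+|\overline{A}_t|-|\underline{A}_t|$. Dividing by $|\overline{A}_t|$, using $|\overline{A}_t|\ge|h(p^\ast,\alpha^\ast)|$ (so that $|V^\ast|/|\overline{A}_t|\le|V^\ast|/|h(p^\ast,\alpha^\ast)|$), and invoking the stopping rule $(|\overline{A}_t|-|\underline{A}_t|)/|\overline{A}_t|\le\xi$ gives
\[
\frac{|\overline{V}_t|}{|\overline{A}_t|}\ \le\ \frac{|V^\ast|}{|h(p^\ast,\alpha^\ast)|}+\xi ,
\]
and taking expectations with the fdr control of $h$ yields $\E(|\overline{V}_t|/|\overline{A}_t|)\le\alpha+\xi$.

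The inequality chains are routine; the delicate points are the degenerate configurations in which a denominator vanishes — resolved by the $0/0=0$ convention together with the sandwich inclusions, which propagate emptiness from $\overline{A}_n$ (or, via the stopping rules, from $\underline{A}_n$) to the remaining sets — and the finiteness of the stopping times $s$ and $t$, which is what makes the final expectations meaningful. The finiteness I would establish by working on the event $\event$, where the convergence in Theorem \ref{theorem_convergence} makes $\underline{A}_n$ and $\overline{A}_n$ eventually coincide (so the defining ratios eventually equal $1\le\eta$, respectively $0\le\xi$), or, failing that, by capping the algorithm at a finite number of iterations. The restriction to $\event_1$ is precisely what licenses the sandwich; if $\PP(\event_1)<1$ one still obtains the same bounds with the expectations taken over $\event_1$, which is the operative regime since the framework is built so that $\PP(\event_1)$ can be made close to one.
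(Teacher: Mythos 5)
Your proposal is correct and follows essentially the same route as the paper's proof: sandwich $\underline{A}_n\subseteq h(p^\ast,\alpha^\ast)\subseteq\overline{A}_n$ from Lemma \ref{lemma_monotonicity}, bound $|\underline{V}_s|$ by $|V^\ast|$ (resp.\ $|\overline{V}_t|$ by $|V^\ast|+|\overline{A}_t|-|\underline{A}_t|$), apply the stopping rule, and take expectations using the fdr control of $h$. Your only slip is the parenthetical claim that $\underline{A}_s=\emptyset$ forces $\overline{A}_s=\emptyset$ — under the paper's convention the stopping ratio is set to zero whenever $|\underline{A}_n|=0$, so this implication need not hold, but the $0/0=0$ convention on the false discovery proportion covers that case anyway.
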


In Lemma \ref{lemma_fdr}, we define the fraction in the definition of the stopping time
$s$ (time $t$) to be zero if $|\underline{A}_n|$ ($|\overline{A}_n|$) is zero
as in this case, false rejection errors are impossible.
Lemma \ref{lemma_fdr} thus provides two different guarantees on the fdr,
a multiplicative one on the set of rejected hypotheses $\underline{A}_s$ and an additive
guarantee on the rejections in $\overline{A}_t$ with respect to the two stopping times $s$ and $t$.

\section{Improving existing algorithms}
\label{section_improving}
In this section, we introduce a class of established methods
which estimate $h(p^\ast,\alpha^\ast)$
and show how the framework
given by Algorithm \ref{algorithm_generic}
can be used to modify these methods in such a way as to provide a guarantee
on the correctness of their test results.
We will demonstrate our proposed modifications
by extending the improved na\"ive method presented in
Section \ref{subsection_improved_naive}
to the situation of an estimated testing threshold.

Consider an existing method to compute $h(p^\ast,\alpha^\ast)$.
The threshold $\alpha^\ast$ can either be constant or given by
a monotonic (increasing or decreasing) function $g: [0,1]^m \rightarrow \R$,
thus $\alpha^\ast=g(p^\ast)$.
In the latter case, $\alpha^\ast$ is a function of $p^\ast$ and thus unknown itself.

Methods working with bootstrap point estimates of $p^\ast$
\citep{BesagClifford1991,Wieringen2008,Sandve2011,JiangSalzman2012},
fitted distributions \citep{Knijnenburg2009}
or permutation based methods
\citep{WestfallYoung1993,WestfallTroendle2008,Meinshausen2006}
can be phrased in the following way:
Draw independent samples $X_{ij}\sim$ Bernoulli($p_i^\ast$), $j \in \N$, for each $i \in \{ 1,\ldots,m \}$.
Use a finite number $S_i$ of these samples $X_{i1},\ldots,X_{i,S_i}$
to compute a p-value estimate $\hat{p}_i$ of $p_i^\ast$,
where $S_i$ is a (random) index and $i \in \{ 1,\ldots,m \}$.
Estimate the testing threshold $\alpha^\ast$ using the plug-in estimate $\hat{\alpha}=g(\hat{p})$,
where $\hat{p} = (\hat{p}_1,\ldots,\hat{p}_m )$.
Return $h(\hat{p},\hat{\alpha})$ as the test result.

Based on Algorithm \ref{algorithm_generic}
we propose to modify any method of the above type by
\begin{enumerate}
  \item Maintaining a confidence sequence \citep{Lai1976}
with a coverage probability of $1-\epsilon/m$
for each p-value $p_i^\ast$, $i \in \{ 1,\ldots,m \}$,
and by using each sequence as $F_i(O_{1:n})$ in Algorithm \ref{algorithm_generic}.
The overall error probability $\epsilon$ is chosen by the user.
  \item Computing plug-in bounds $F_{m+1}(O_{1:n})$ for $\alpha^\ast$
using the monotonicity of $g$ and the above confidence sequences.
  \item Reporting hypotheses in $\underline{A}_n$ as rejected and
  in $\overline{A}_n^c$ as non-rejected.
  The remaining hypotheses are still undecided.
\end{enumerate}

As the confidence sequence of \cite{Lai1976} satisfies
$\PP(\exists n: p_i^\ast \notin F_i(O_{1:n})) < \beta$
for each $p_i^\ast$
(see Section \ref{subsection_improved_naive}),
the choice $\beta=\epsilon/m$ yields
\begin{align*}
\PP(\exists i,n:p_i^\ast \notin F_i(O_{1:n}))
&\leq \sum_{i=1}^m \PP(\exists n: p_i^\ast \notin F_i(O_{1:n}))
\leq \sum_{i=1}^m \epsilon/m = \epsilon,
\end{align*}
and hence $\PP(p_i^\ast \in F_i(O_{1:n}) ~\forall i \in \{ 1,\ldots,m \}, n \in \N) \geq 1-\epsilon$.
The event $\event_1$ thus occurs with probability at least $1-\epsilon$.

Consequently, any modified method of the above type
has the following advantage over its unimproved counterpart:

\begin{remark}
\label{remark_improved_method}
By Lemma \ref{lemma_monotonicity},
a modified method of the above type has the property that
all the hypotheses in the set $\underline{A}_n$ ($\overline{A}_n^c$)
which are rejected (non-rejected) in any iteration $n$
are indeed correctly rejected (non-rejected) with probability at least $1-\epsilon$.
\end{remark}

Remark \ref{remark_improved_method} applies to the improved na\"ive method
(Section \ref{subsection_improved_naive})
upon stopping in iteration $s$ as well as to the
methods presented in the following two examples.
First, we generalize
Section \ref{subsection_improved_naive}
to the situation where the testing threshold is unknown.

\begin{example}
\label{example_poundscheng}
\normalfont
Additionally to the setting of
Section \ref{subsection_improved_naive},
we assume that multiple testing is
carried out at the corrected testing threshold
$\alpha^\ast = t^\ast/ \hat{\pi}_0(p^\ast)$,
where $t^\ast$ is an uncorrected threshold
(typically $t^\ast=0.05$ or $t^\ast=0.1$)
and $\hat{\pi}_0(p)=\min \left( 1,\frac{2}{m} \sum_{i=1}^m p_i \right)$
is an estimator of the proportion of true null hypotheses \citep{PoundsCheng2006}.
Recent applications of this threshold include
\cite{HanDalal2012}, \cite{Lu2011}, \cite{Jupiter2010}, \cite{Cheng2009}.
As $\hat{\pi}_0(p^\ast)$ depends on the p-values,
the corrected threshold $\alpha^\ast$ is unknown in practice.
We thus need to compute a confidence interval for it.
The interval can be constructed using the monotonicity of $\hat{\pi}_0(p)$:
in iteration $n$,
$\underline{\pi}_n=\hat{\pi}_0( \min I_n^1,\ldots,\min I_n^m  )$
is a lower bound on $\hat{\pi}_0(p^\ast)$, likewise
$\overline{\pi}_n=\hat{\pi}_0( \max I_n^1,\ldots,\max I_n^m )$
is an upper bound.
This immediately translates to the interval
$F_{m+1}(O_{1:n})=[ t^\ast/\overline{\pi}_n, t^\ast/\underline{\pi}_n ]$
for $\alpha^\ast$.
\end{example}

We try to improve Example \ref{example_poundscheng}
by using a (hopefully) tighter confidence interval $F_{m+1}(O_{1:n})$ for $\alpha^\ast$
based on Hoeffding's inequality \citep{Hoeffding1963}.

\begin{example}
\label{example_hoeffding}
\normalfont
Suppose we have observed $s$ samples $X_1^i,\ldots,X_s^i$ per hypothesis $H_{0i}$,
where $X^i_j$ is the indicator
of an exceedance for the $j$th sample drawn for $H_{0i}$
(see Section \ref{subsection_improved_naive}).
Then,
$\PP \left( \left| \frac{1}{ms} \sum_{i=1}^m \sum_{j=1}^s X^i_j
- \frac{1}{m} \sum_{i=1}^m p_i^\ast \right| \geq u \right)
\leq 2 \exp \left( -2 ms u^2 \right)$
for all $u>0$ by Hoeffding's inequality.
Thus for a given $\eta \in [0,1]$,
$\frac{1}{ms} \sum_{i=1}^m \sum_{j=1}^s X^i_j \pm \sqrt{ -\log(\eta/2)/(2ms) }$
are boundaries of a $1-\eta$ confidence interval for
$\frac{1}{m} \sum_{i=1}^m p_i^\ast$.
Using the monotonicity of the mapping $x \mapsto t^\ast/\min(1,2x)$,
this immediately translates to a $1-\eta$ confidence interval for $\alpha^\ast$.
When using Hoeffding's interval in the improved na\"ive method,
we allocate an error of $\eta=\epsilon/(m+1)$ to the computation of Hoeffding's interval
as well as to the computation of each of the $m$ confidence sequences for the p-values.
As the improved na\"ive method is open-ended, we use a non-negative real sequence
$(\eta_n)_{n \in \N}$ satisfying $\sum_{n=1}^\infty \eta_n=\eta$
to distribute $\eta$ for Hoeffding's interval over all iterations of the algorithm,
thus computing it at level $\eta_n$ in each iteration $n$.
\end{example}

Both the plug-in interval (Example \ref{example_poundscheng})
and Hoeffding's confidence interval (Example \ref{example_hoeffding})
will be evaluated in Section \ref{section_use_practice}.

\section{Well-behaved step-up and step-down procedures}
\label{section_procedures}
Although the multiple testing procedure $h$ does not have to be of
a special form, many procedures used in practice such as the ones of
\cite{Bonferroni1936}, \cite{Sidak1967}, \cite{Holm1979}, \cite{Hochberg1988}
or the one of \cite{Benjamini1995CFD}
belong to a certain class of procedures,
called step-up and step-down procedures.
We will simplify Condition \ref{condition_h} for step-up and step-down procedures
in Section \ref{subsection_procedures_stepupstepdown}
and use the simplified condition
in Section \ref{subsection_procedures_examples}
to verify that many widely used procedures are well-behaved.
As shown in Section \ref{appendix_hommel} in the Appendix, the \cite{Hommel1988} procedure
is an example of a procedure which is not well-behaved.

\subsection{Condition \ref{condition_h} can be simplified for step-up and step-down procedures}
\label{subsection_procedures_stepupstepdown}
Suppose we are given an arbitrary step-up procedure $h_u$ or step-down procedure $h_d$
\citep{RomanoShaikh2006}
returning the set of rejected indices.
For our purposes, we phrase these two procedures
in terms of a threshold function
$\tau_\alpha: \{ 1,\ldots,m \} \rightarrow [0,1]$
which depends on a threshold $\alpha \in [0,1]$
and returns the critical value $\tau_\alpha(i)$
each $p_{(i)}$ is compared to:
\begin{align}
\label{label_stepup}
h_u(p,\alpha) &= \left\{ i \in \{ 1,\ldots,m \}: p_i \leq \max \{ p_{(j)}: p_{(j)} \leq \tau_\alpha(j) \} \right\},\\
\label{label_stepdown}
h_d(p,\alpha) &= \left\{ i \in \{ 1,\ldots,m \}: p_i < \min \{ p_{(j)}: p_{(j)} > \tau_\alpha(j) \} \right\},
\end{align}
where $\max \emptyset := 0$, $\min \emptyset := 1$,
and where the order statistic of $p_1,\ldots,p_m$
is denoted by $p_{(1)} \leq \ldots \leq p_{(m)}$.

We assume that the threshold function $\tau_\alpha$ satisfies the following condition.

\begin{condition}
\label{condition_threshold}
\begin{enumerate}
  \item $\tau_\alpha(i)$ is non-decreasing in $i$ for each fixed $\alpha$.
  \item $\tau_\alpha(i)$ is continuous in $\alpha$ and non-decreasing in $\alpha$ for each fixed $i$.
\end{enumerate}
\end{condition}

By the following lemma, a step-up or step-down procedure
is well-behaved if the threshold function $\tau_\alpha$
defining it satisfies Condition \ref{condition_threshold}.

\begin{lemma}
\label{lemma_stepupstepdown}
If $\tau_\alpha$ satisfies Condition \ref{condition_threshold}
then the corresponding $h_u$ and $h_d$ are monotonic
and satisfy the first part of Condition \ref{condition_h}.
If moreover $\tau_{\alpha^\ast}(i) \neq p^\ast_{(i)}$ for all $i \in \{ 1,\ldots,m \}$,
$h_u$ and $h_d$ also satisfy the second part of Condition \ref{condition_h}
for $p^\ast$ and $\alpha^\ast$.
\end{lemma}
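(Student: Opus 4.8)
The plan is to verify the three assertions in turn, working directly from the definitions \eqref{label_stepup} and \eqref{label_stepdown} and Condition \ref{condition_threshold}. Throughout, the key observation is that a step-up or step-down procedure depends on $p$ only through which inequalities $p_{(j)} \le \tau_\alpha(j)$ (resp.\ $p_{(j)} > \tau_\alpha(j)$) hold, together with the value of the cutoff $\max\{p_{(j)} : p_{(j)} \le \tau_\alpha(j)\}$ (resp.\ $\min\{p_{(j)}: p_{(j)} > \tau_\alpha(j)\}$) that each $p_i$ is then compared against. I will phrase the argument so that it covers $h_u$ and $h_d$ simultaneously wherever possible, flagging the (minor) differences between the weak and strict inequalities.

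\emph{Monotonicity.} Fix $p \ge q$ and $\alpha \le \alpha'$. I first reduce to the case of a single changed coordinate / a single change of $\alpha$ by a chaining argument, so it suffices to treat (a) decreasing one $p_i$ with $\alpha$ fixed, and (b) increasing $\alpha$ with $p$ fixed. For (a): decreasing a single $p_i$ can only move it earlier in the sorted order, and since $\tau_\alpha$ is non-decreasing in $i$ (part 1 of Condition \ref{condition_threshold}), one checks that the set $\{j : p_{(j)} \le \tau_\alpha(j)\}$ does not lose any element and the cutoff $\max\{p_{(j)}: p_{(j)} \le \tau_\alpha(j)\}$ does not decrease; since moreover $p_i$ itself decreased, $h_u$ can only grow. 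The step-down case is analogous using $\min$. For (b): by part 2 of Condition \ref{condition_threshold}, $\alpha \le \alpha'$ gives $\tau_\alpha(j) \le \tau_{\alpha'}(j)$ for every $j$, so again no index leaves the relevant set and the cutoff cannot decrease, whence $h(\cdot,\alpha) \subseteq h(\cdot,\alpha')$. This yields Definition \ref{definition_monotonicity}.

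\emph{First part of Condition \ref{condition_h}.} Let $R = h(p,\alpha)$ and suppose $q_i \le p_i$ for $i \in R$ and $q_i \ge p_i$ for $i \notin R$. The idea is that such a perturbation pulls rejected p-values down and non-rejected ones up, which cannot change any of the comparisons that determine $R$. Concretely, I would show that the cutoff value $c := \max\{p_{(j)}: p_{(j)} \le \tau_\alpha(j)\}$ equals $\max\{p_i : i \in R\}$ (for $h_u$; and the strict analogue for $h_d$), that all rejected p-values are $\le c$ and all non-rejected ones are $> c$, and that after the perturbation the analogous quantities sandwich in the same way, so that $h(q,\alpha) = R$. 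The careful point is bookkeeping with the order statistics — when coordinates move, the indices realizing $p_{(j)}$ can change — so I would argue in terms of the \emph{set} of rejected indices and the sorted values restricted to that set and its complement, rather than tracking individual order statistics.

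\emph{Second part of Condition \ref{condition_h}.} Assume $\tau_{\alpha^\ast}(i) \ne p^\ast_{(i)}$ for all $i$. Since there are finitely many $i$, set $\delta_1 := \min_i |p^\ast_{(i)} - \tau_{\alpha^\ast}(i)| > 0$. By continuity of $\tau_\alpha(i)$ in $\alpha$ (part 2 of Condition \ref{condition_threshold}) and continuity of the order statistics in $p$, for $(p,\alpha)$ in a small enough ball around $(p^\ast,\alpha^\ast)$ we have $|p_{(i)} - p^\ast_{(i)}|$ and $|\tau_\alpha(i) - \tau_{\alpha^\ast}(i)|$ both less than $\delta_1/3$ for every $i$, hence the sign of $p_{(i)} - \tau_\alpha(i)$ agrees with that of $p^\ast_{(i)} - \tau_{\alpha^\ast}(i)$; shrinking $\delta$ further (using that the finitely many p-values that are $\le$ vs.\ $>$ the cutoff are separated by a positive gap at $p^\ast$) ensures each $p_i$ lands on the same side of the cutoff as $p^\ast_i$ does. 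Therefore $h(p,\alpha) = h(p^\ast,\alpha^\ast)$ on that ball, giving the required $\delta$.

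\emph{Main obstacle.} I expect the bookkeeping in the first part of Condition \ref{condition_h} to be the delicate step: because reordering the coordinates changes which index attains each order statistic, one has to argue cleanly that the decision set is invariant without being able to track $p_{(j)}$ pointwise. The chaining reduction for monotonicity and the finiteness-plus-continuity argument for the second part are routine by comparison, the only care there being to handle the $\max\emptyset := 0$ and $\min\emptyset := 1$ boundary conventions and the distinction between the $\le$ in $h_u$ and the $<$ in $h_d$.
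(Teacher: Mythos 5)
Your plan follows essentially the same route as the paper's proof: the paper factors the argument into an invariance lemma (single-coordinate perturbations for monotonicity, plus the observation that the rejected indices occupy exactly the ranks $1,\dots,k$ so that lowering rejected p-values and raising non-rejected ones preserves all the comparisons) and a separation lemma built from the finite positive minimum of $|p^\ast_{(i)}-\tau_{\alpha^\ast}(i)|$ together with the inter-point gaps of $p^\ast$ — which is precisely the content of your three steps. One caveat on the monotonicity step: the intermediate claim that the cutoff $\max\{p_{(j)}:p_{(j)}\le\tau_\alpha(j)\}$ does not decrease when a p-value is lowered is false as stated (lowering the largest rejected p-value lowers the cutoff); what the rank bookkeeping actually delivers, using that $q_{(j)}\le p_{(j)}$ for all $j$ and that $\tau_\alpha$ is non-decreasing in $i$, is that the largest index $k$ with $p_{(k)}\le\tau_\alpha(k)$ cannot decrease and every previously rejected p-value still lies below the new cutoff, so the conclusion survives but that sentence needs rewording.
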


We investigate in which cases the condition
$\tau_{\alpha^\ast}(i) \neq p^\ast_{(i)}$ for all $i \in \{ 1,\ldots,m \}$
in Lemma \ref{lemma_stepupstepdown}
is satisfied if $p^\ast$ are random.

If $p^\ast$ come from a discrete distribution,
the p-values satisfying
$\tau_{\alpha^\ast}(i) \neq p^\ast_{(i)}$
do not necessarily form a null set.
For a fixed $\alpha^\ast$,
however, the p-values not satisfying the conditions of
Lemma \ref{lemma_stepupstepdown} form a null set
if $p^\ast$ are random with a distribution
that is absolutely continuous with respect to the Lebesgue measure.

We now consider the case of a threshold $\alpha^\ast$ given by a deterministic function of the p-values $p^\ast$.
We show that
the p-values $p^\ast$ not satisfying the condition
$\tau_{\alpha^\ast}(i) \neq p^\ast_{(i)}$ for all $i \in \{ 1,\ldots,m \}$
in Lemma \ref{lemma_stepupstepdown}
form a null set
if $p^\ast$ come from an absolutely continuous distribution with respect to the Lebesgue measure,
and if the \cite{Benjamini1995CFD} or \cite{Bonferroni1936}
procedure applied to the $p^\ast$ dependent threshold of \cite{PoundsCheng2006} is used to test the hypotheses.

The \cite{Benjamini1995CFD} procedure
is characterized by
the threshold function $\tau_\alpha(i)=i\alpha/m$
(see Section \ref{subsection_procedures_examples}).
The threshold of \cite{PoundsCheng2006} is given by
$\alpha^\ast(p^\ast) = t^\ast/\min \left( 1, \frac{2}{m} \sum_{r=1}^m p_r^\ast \right)$
(see Examples \ref{example_poundscheng} and \ref{example_hoeffding}).

Fix $i \in \{1,\ldots,m\}$.
We use the fact that $\sum_{r=1}^m p_r^\ast = \sum_{r=1}^m p_{(r)}^\ast$ and
that either
$\tau_{\alpha^\ast(p^\ast)}(i) = it^\ast/m$
or
$\tau_{\alpha^\ast(p^\ast)}(i) = \frac{i t^\ast}{2}(\sum_{r=1}^m p_r^\ast)^{-1}$.
Conditional on $\{ p^\ast_{(r)}: r \neq i \}$,
$$\tau_{\alpha^\ast(p^\ast)}(i)=p_{(i)}^\ast
\Rightarrow
\frac{it^\ast}{m}=p_{(i)}^\ast \vee \frac{it^\ast}{2}=p_{(i)}^\ast(p_{(i)}^\ast+s_i)
\Leftrightarrow
p_{(i)}^\ast \in \left\{ \frac{it^\ast}{m}, \rho_1, \rho_2 \right\},$$
where $s_i = \sum_{r\neq i} p_{(r)}^\ast$ and $\rho_1$, $\rho_2$ are the two solutions
of $it^\ast/2=\rho(\rho+s_i)$.
Two distinct solutions always exist given $t^\ast>0$.

Using that
$p^\ast_{(1)} \leq \cdots \leq p^\ast_{(m)}$
are also random with a distribution that is absolutely continuous
with respect to the Lebesgue measure,
given $p^\ast$ come from an absolutely continuous distribution, implies
$\PP(p_{(i)}^\ast \in \{ it^\ast/m, \rho_1, \rho_2 \} ~|~ p^\ast_{(r)}: r \neq i )=0$.

The previous result immediately extends to
$\PP(\exists i: \tau_{\alpha^\ast(p^\ast)}(i)=p_{(i)}^\ast)=0$,
hence the p-values $p_{(i)}^\ast$ which coincide with their critical value $\tau_{\alpha^\ast(p^\ast)}(i)$
form a null set.

As the threshold function of the \cite{Bonferroni1936} correction can be
recovered from the one of the \cite{Benjamini1995CFD} procedure
by removing the dependence of $\tau_\alpha(i)=i\alpha/m$
on $i$ (see Section \ref{subsection_procedures_examples}),
the above result also holds true for the \cite{Bonferroni1936} correction.

A similar argumentation can be used to extend the above result
to other common estimators of $\alpha^\ast$ and threshold functions $\tau_\alpha$.

\subsection{Examples of well-behaved step-up and step-down procedures}
\label{subsection_procedures_examples}
This section shows that a variety of commonly used step-up and
step-down procedures are monotonic and satisfy Condition \ref{condition_threshold}.

The following multiple testing procedures are determined by $\tau_\alpha(i)$,
where $i \in \{ 1,\ldots,m \}$,
and control the fwer or the fdr at a threshold $\alpha$.
We denote the hypothesis corresponding
to the ordered p-value $p_{(i)}$ by $H_{0(i)}$, $i \in \{ 1,\ldots,m \}$.

In most cases, Condition \ref{condition_threshold} can be checked
by considering the derivatives of $\tau_\alpha(i)$
with respect to $\alpha$ and $i$, thus regarding $i$ as a continuous parameter.
Unless stated otherwise, all the threshold functions listed below are clearly
non-decreasing in both $i$ and $\alpha$ as well as continuous in $\alpha$
and thus satisfy Condition \ref{condition_threshold}.

The \cite{Bonferroni1936} correction can be derived from either
a step-up or a step-down procedure
using the constant threshold function $\tau_\alpha(i)=\alpha/m$.

The following step-up procedures are well-behaved:
\begin{enumerate}
  \item The \cite{Simes1986} procedure rejects $\cap_{i \in \{ 1,\ldots,m \} } H_{0i}$
if there exists $k \in \{ 1,\ldots,m \}$ such that $p_{(k)} \leq k \alpha/m$.
It can be used in our framework with the help of the following modification:
Once $h_u(p,\alpha)$ for a step-up procedure with threshold function $\tau_\alpha(i)=i\alpha/m$
is correctly determined, the \cite{Simes1986} procedure rejects
$\cap_{i \in \{ 1,\ldots,m \} } H_{0i}$
if and only if $|h_u(p,\alpha)|>0$.

  \item The \cite{Hochberg1988} procedure uses $\tau_\alpha(i) = \alpha/(m+1-i)$.

  \item The \cite{Rom1990} procedure increases the power
of the \cite{Hochberg1988} procedure by replacing its critical
values $\tau_\alpha(i) = \alpha/(m+1-i)$ by ``sharper'' values
$\tau_\alpha(i) = c_i$. The $c_i$ are computed recursively as given
in \cite{Rom1990} and satisfy $c_i \nearrow$ for
a fixed $\alpha$. Moreover, the $c_i$ are non-decreasing in $\alpha$.

  \item The choice $\tau_\alpha(i)=i\alpha/m$ yields the \cite{Benjamini1995CFD} procedure.

  \item The \cite{BenjaminiYekutieli2001} procedure controls the
fdr under arbitrary dependence by applying the
\cite{Benjamini1995CFD} procedure at the corrected constant threshold
$\alpha/\left( \sum_{i=1}^m i^{-1} \right)$.
\end{enumerate}

Similarly, the following step-down procedures satisfy Condition \ref{condition_threshold}:
\begin{enumerate}
  \item The \cite{Sidak1967} correction uses $\tau_\alpha(i) = 1-(1-\alpha)^{1/(m+1-i)}$.

  \item The choice $\tau_\alpha(i) = \alpha/(m+1-i)$
yields the \cite{Holm1979} procedure.

  \item The \cite{Shaffer1986} procedure modifies the
\cite{Holm1979} procedure in order to obtain an increase in power.
For the tests under consideration,
let $0 \leq a_1 < a_2 < \cdots < a_r \leq n$ be
all possible numbers of true null hypotheses.
Assuming that $H_{0(1)}, \ldots, H_{0(i-1)}$ are false,
let $t_i = \max \{ a_j: a_j \leq n-i+1 \}$ be the maximum possible number of
true null hypotheses.
The \cite{Shaffer1986} procedure determines the minimal index
$k$ such that $p_{(k)} > \alpha/t_k$ and then rejects
$H_{0(1)}, \ldots, H_{0(k-1)}$.
It can be obtained from a step-down procedure using
$\tau_\alpha(i) = \alpha/t_i$,
which is clearly continuous and non-decreasing in $\alpha$ for a fixed $i$.
As $a_i \nearrow$ and thus $t_i \searrow$,
$\tau_\alpha(i)$ is also non-decreasing in $i$ for a fixed $\alpha$.
\end{enumerate}

For a given $\alpha^\ast$,
by Lemma \ref{lemma_stepupstepdown}, all the procedures listed above
are well-behaved for all but a null set of p-values $p^\ast$.

\section{Using the framework in practice}
\label{section_use_practice}
The improved na\"ive method
(Section \ref{subsection_improved_naive})
is capable of computing test results which consist,
up to a pre-specified error probability $\epsilon$,
of sets of correctly rejected and correctly non-rejected hypotheses
as well as of a set of undecided hypotheses.
The following contains an example of such a classification.

\cite{Sandve2011} use their method \texttt{MCFDR} to classify a dataset of
gene modifications (so-called H3K4me2-modifications)
of \cite{Pekowska2010}.
This dataset consists of gene regions and
gene modifications within each region, characterized by their midpoint.
The beginning and the end of each
region on the genome are normed to $0$ and $1$, respectively.
The authors test if the gene modifications appear
more often in a certain part of the gene region.

To be precise,
\cite{Sandve2011} observe $k$ random points $Y_1,\ldots,Y_k$ in $[0,1]$
(these are the midpoints of the gene modifications)
and test the null hypothesis
$H_0: \E \left( \frac{1}{k} \sum_{i=1}^k Y_i \right) \geq 0.5$
against the alternative
$H_1: \E \left( \frac{1}{k} \sum_{i=1}^k Y_i \right) < 0.5$
using the test statistic $T=\frac{1}{k} \sum_{i=1}^k Y_i$.
Each null hypothesis is tested by permuting
the midpoints in each region while preserving
their inter-point distances.

\cite{Sandve2011} first filter the dataset for genes with at least $10$ modifications
per gene region.
Each such region becomes one hypothesis,
leading to $m=3465$ hypotheses (gene regions) under consideration.
They evaluate the data using the procedure of \cite{Benjamini1995CFD}
with a corrected testing threshold at level
$0.1/\hat{\pi}_0(\hat{p})$, where $\hat{\pi}_0$ is the
estimator of \cite{PoundsCheng2006}
introduced in Example \ref{example_poundscheng}
and $\hat{p}$ is an estimate of $p^\ast$ returned by \texttt{MCFDR}.
\cite{Sandve2011} report $2747$ significant hypotheses.

Nevertheless, the authors do not provide any
guarantee on the correctness of their findings.
Recomputing the results of \cite{Sandve2011} indeed shows
considerable variability.
To demonstrate this,
we re-classify the H3K4me2 dataset using the \texttt{MCFDR} algorithm
of \cite{Sandve2011} a total number of $r=1000$ times.
Let $p_i^s$ ($p_i^n$)
be the empirical probability that hypothesis $H_{0i}$
is significant (non-significant) in these $r$ repetitions.

We are interested in measuring the randomness in the output of an algorithm
and use $p_i^r = \min (p_i^s,p_i^n)$
as probability of $H_{0i}$ being randomly classified.
We call all hypotheses having
$p_i^r>0.01$
``randomly classified'' and denote their total number by \textit{rc}.
The choice $0.01$ is arbitrary.
It depends on how much uncertainty
a user is willing to tolerate for a single decision on a hypothesis to be ``reasonably firm''.
For \texttt{MCFDR} we observe that $353$ hypotheses
remain randomly classified on average.

\begin{table}[t]
\begin{center}
\caption{Repeated application of the improved and the unimproved
na\"ive method to the same data.
\label{table_comparison}}
\begin{tabular}{l|l|llll|llll}
& na\"ive & \multicolumn{8}{c}{improved na\"ive method}\\
& method & \multicolumn{4}{c|}{with plug-in interval (Ex.\ref{example_poundscheng})}
& \multicolumn{4}{c}{with Hoeffding's interval (Ex.\ref{example_hoeffding})}\\
s & \textit{rc} & rejected & non-rej. & undec. & \textit{rc} & rejected & non-rej. & undec. & \textit{rc}\\
\hline
$10^2$		&349		&0	&161.8	&3303.2	&0	&0	&372.0	&3093.0	&0\\
$10^3$		&107		&2386.0	&487.5	&591.5	&0	&2568.5	&576.0	&320.5	&0\\
$10^4$		&33		&2649.0	&624.6	&191.4	&0	&2697.3	&661.7	&106.0	&0
\end{tabular}
\end{center}
$s$: number of samples drawn per hypotheses;
\textit{rc}: number of randomly classified hypotheses;
rejected, non-rejected and undecided are average numbers
based on $1000$ repetitions.
\end{table}

We first use the (unimproved) na\"ive method
(as defined at the beginning of
Section \ref{subsection_improved_naive})
with $s \in \{ 10^2,10^3,10^4 \}$ samples per hypothesis
to classify the same dataset.
Table \ref{table_comparison} shows the number of randomly classified hypotheses
\textit{rc} observed for the na\"ive method
as a function of $s$ (second column).
For $s=10^2$, the total effort is comparable to the one of \texttt{MCFDR}
and both methods yield equally high numbers of random decisions
($rc \approx 350$).
For high precision ($s=10^4$), up to
$33$
hypotheses remain inconsistently classified.

We then apply the improved na\"ive method
(Section \ref{subsection_improved_naive})
to the same dataset
using an overall error probability of $\epsilon=0.01$.
The improved method is stopped after having drawn $s$ samples per hypothesis.
Table \ref{table_comparison} shows rejected, non-rejected,
undecided (see Remark \ref{remark_improved_method})
and randomly classified hypotheses.
We evaluate both the plug-in interval for $\alpha^\ast$
introduced in Example \ref{example_poundscheng} (columns three to six)
as well as Hoeffding's confidence interval
derived in Example \ref{example_hoeffding} (columns seven to ten).
For Hoeffding's interval, we use
$\eta_n=\nu_n-\nu_{n-1}$ with $\nu_n=\frac{n}{n+s} \frac{\epsilon}{m+1}$, $n \in \N$.

Using a confidence interval for $\alpha^\ast$ based on Hoeffding's inequality
(as opposed to the plug-in interval)
yields considerably more decisions (rejections and non-rejections) and thus
less undecided hypotheses for all ranges of precision.

Although for low numbers of samples many hypotheses remain undecided,
the test results of the improved na\"ive method are consistent
in the sense that no hypothesis is randomly classified.
The improved na\"ive method therefore provides reliable test results
and ensures repeatability.
For a high precision ($s=10^4$), the
improved na\"ive method with Hoeffding's interval for $\alpha^\ast$ yields around
$2700$ rejections and $660$ non-rejections.
The remaining $106$ hypotheses are still undecided,
meaning that within this limited computational effort, no
statement about these hypotheses (gene regions) should be made.
The probability of the above results being correct is at least $0.99$.

We interpret the set of undecided hypotheses as the set of hypotheses for which
a clear decision exists,
even though this decision cannot yet be obtained within the limited computational effort used in a real testing scenario.
Consequently, by using more Monte Carlo samples, the decision of any hypothesis will eventually be revealed.
Alternatively, one can view the set of undecided hypotheses as a set of hypotheses
whose decision is essentially arbitrary.

Finally, the framework is not
limited to a strict familywise error control on all its Monte Carlo decisions.
It would be possible to relax the guarantee and to control a less conservative criterion instead,
for instance the false discovery rate.

\section{Discussion}
\label{section_discussion}
The present article considers p-value based multiple
testing under the assumption that
the p-value of each hypothesis is unknown
and can only be approximated using Monte Carlo simulations.
Although widely occurring in experimental studies,
common methods for this scenario
do not give any guarantee on how their test results
relate to the one obtained if all p-values had been known.

The article introduces a framework for Monte Carlo based multiple testing,
both in terms of a general multiple testing procedure and a generic algorithm.
Conditions on both the multiple testing procedure and the algorithm
guarantee that the rejections and non-rejections
returned by our generic algorithm
are identical to the ones obtained with the p-values.
A simplified condition for step-up and step-down multiple testing procedures is derived.

We demonstrate how to use our framework
to modify established methods in such a way
as to yield theoretical guarantees on their test results.
As demonstrated on a class of commonly used methods,
these modifications can easily be implemented in practice and
come at virtually no additional computational cost.

Improved established methods,
such as the improved na\"ive method evaluated in this article on a real data study,
allow one to report multiple testing results as three sets:
rejected, non-rejected and undecided hypotheses, together
with an error bound on their correctness.
We recommend any multiple testing result to be reported in this fashion.

\appendix
\section{Proofs}
\label{appendix_proofs}
For simplicity of notation we sometimes drop the dependence of the
multiple testing procedure $h(p,\alpha)$ on the threshold $\alpha$.

\subsection{Proofs of Section \ref{subsection_framework_convergenceresults}}
\label{appendix_proofs_subsection_framework_convergenceresults}
\begin{proof}[Proof of Lemma \ref{lemma_monotonicity}]
1. By construction, Algorithm \ref{algorithm_generic}
computes nested intervals, thus
$\overline{p}_n = (\max I_n^i)_{i \in \{1,\ldots,m\} } \searrow$ and
$\overline{\alpha}_n = \max I_n^{m+1} \searrow$
as well as
$\underline{p}_n = (\min I_n^i)_{i \in \{1,\ldots,m\} } \nearrow$ and
$\underline{\alpha}_n = \min I_n^{m+1} \nearrow$.
Hence,
\begin{align*}
\underline{A}_n = h(\overline{p}_n,\underline{\alpha}_n) \subseteq h(\overline{p}_{n+1},\underline{\alpha}_n)
\subseteq h(\overline{p}_{n+1},\underline{\alpha}_{n+1}) = \underline{A}_{n+1},\\
\overline{A}_n = h(\underline{p}_n,\overline{\alpha}_n) \supseteq h(\underline{p}_{n+1},\overline{\alpha}_n)
\supseteq h(\underline{p}_{n+1},\overline{\alpha}_{n+1}) = \overline{A}_{n+1},
\end{align*}
where the first (second) subset relation follows from the monotonicity
of $h$ (Condition \ref{condition_h}) in the first (second) argument.

2. On the event $\event_1$,
$p_i^\ast \in I_n^i$ and $\alpha^\ast \in I_n^{m+1}$
for all $i$ and $n$, thus $\overline{p}_n \geq p^\ast_n \geq \underline{p}_n$
and $\underline{\alpha}_n \leq \alpha^\ast \leq \overline{\alpha}_n$.
By monotonicity of $h$ (Condition \ref{condition_h}),
$\underline{A}_n = h(\overline{p}_n,\underline{\alpha}_n) \subseteq
h(p^\ast,\alpha^\ast) \subseteq h(\underline{p}_n,\overline{\alpha}_n) = \overline{A}_n$
$\forall n \in \N$.
\end{proof}

\begin{proof}[Proof of Theorem \ref{theorem_convergence}]
Let $\overline{\alpha}_n = \max I_n^{m+1}$,
$\underline{\alpha}_n = \min I_n^{m+1}$
as well as
$B_n = \overline{A}_n \setminus \underline{A}_n$.
Suppose $\exists i \in \limsup_{n \rightarrow \infty} B_n$.
On the event $\event_2$,
$|I_n^i| \rightarrow 0$ as $n \rightarrow \infty$
for $i \in \limsup_{n \in \N} B_n$
as well as $|I_n^{m+1}| \rightarrow 0$ as $n \rightarrow \infty$.
Let $\delta$ be as given in Condition \ref{condition_h}.
As $B_n \subseteq \{1,\ldots,m \}$ is finite $\forall n \in \N$, there exists
$n_0 \in \N$ such that $|I_n^i|^2 < \delta^2/m$
and $|\underline{\alpha}_n - \overline{\alpha}_n| < \delta$ for
$n \geq n_0$ and all $i \in \lim \sup_{n \rightarrow \infty} B_n$.

We show that for all $n \geq n_0$,
\begin{align*}
\overline{A}_n =
h( (\min I_n^i)_{i \in \{ 1,\ldots,m \}}, \overline{\alpha}_n ) = h(p^\ast,\alpha^\ast) =
h( (\max I_n^i)_{i \in \{ 1,\ldots,m \}}, \underline{\alpha}_n )
=\underline{A}_n.
\end{align*}
To do this, we show
\begin{align*}
h(p^{(1)},\overline{\alpha}_n) &= h(p^{(2)},\overline{\alpha}_n) = h(p^{(3)},\overline{\alpha}_n),\\
h(p^{(3)},\overline{\alpha}_n) &= h(p^{(4)},\alpha^\ast) = h(p^{(5)},\underline{\alpha}_n),\\
h(p^{(5)},\underline{\alpha}_n) &= h(p^{(6)},\underline{\alpha}_n) = h(p^{(7)},\underline{\alpha}_n),
\end{align*}
where
\begin{center}
\begin{tabular}{lll}
$p^{(1)} := (\min I_n^i)_{i \in \{ 1,\ldots,m \}},$&
$p^{(4)} := p^\ast,$\\
$p^{(2)} := \begin{cases} \begin{matrix} \min I_n^i & i \in \overline{A}_n,\\ p_i^\ast & i \notin \overline{A}_n, \end{matrix} \end{cases}$&
$p^{(5)} := \begin{cases} \begin{matrix} \max I_n^i & i \in B_n,\\ p_i^\ast & i \notin B_n, \end{matrix} \end{cases}$\\
$p^{(3)} := \begin{cases} \begin{matrix} \min I_n^i & i \in B_n,\\ p_i^\ast & i \notin B_n, \end{matrix} \end{cases}$&
$p^{(6)} := \begin{cases} \begin{matrix} \max I_n^i & i \in \overline{A}_n,\\ p_i^\ast & i \notin \overline{A}_n, \end{matrix} \end{cases}$\\&
\end{tabular}
\end{center}
and
$p^{(7)} := (\max I_n^i)_{i \in \{ 1,\ldots,m \}}$.
The following holds true on the event $\event_1$.

(1) By definition, $\overline{A}_n = h(p^{(1)},\overline{\alpha}_n)$.
As $p^{(2)}_j = p_j^\ast \geq \min I_n^j = p^{(1)}_j$ $\forall j \notin \overline{A}_n$
and $p^{(2)}_j = p^{(1)}_j$ $\forall j \in \overline{A}_n$,
the first part of Condition \ref{condition_h} yields
$\overline{A}_n = h(p^{(1)},\overline{\alpha}_n) = h(p^{(2)},\overline{\alpha}_n)$
for a fixed $\overline{\alpha}_n$.

(2) As $(\max I_n^i)_{i \in \{ 1,\ldots,m \}} \geq p^{(3)}$ and as
$h$ is monotonic by Condition \ref{condition_h},
$\underline{A}_n \subseteq h(p^{(3)},\underline{\alpha}_n)
\subseteq h(p^{(3)},\overline{\alpha}_n)$.
As $p^{(2)}_j = \min I_n^j \leq p_j^\ast = p^{(3)}_j$ $\forall j \in \underline{A}_n$
and $p^{(2)}_j = p^{(3)}_j$ $\forall j \notin \underline{A}_n$,
the first part of Condition \ref{condition_h} yields
$h(p^{(2)},\overline{\alpha}_n) = h(p^{(3)},\overline{\alpha}_n)$.

(3) On the event $\event_1$,
$|\underline{\alpha}_n - \overline{\alpha}_n| < \delta$
implies $|\alpha^\ast - \overline{\alpha}_n| < \delta$
and $|I_n^i|^2 < \delta^2/m$ implies
$\| p^{(3)} - p^\ast \| < \delta$.
The second part of Condition \ref{condition_h} thus yields
$h(p^{(3)},\overline{\alpha}_n) = h(p^{(4)},\alpha^\ast) = h(p^\ast,\alpha^\ast)$
$\forall n \geq n_0$.

Arguing similarly to (1), (2), (3) we can show
$h(p^{(4)},\alpha^\ast) = h(p^{(5)},\underline{\alpha}_n)$
as well as
$h(p^{(5)},\underline{\alpha}_n) = h(p^{(6)},\underline{\alpha}_n)$
and
$h(p^{(6)},\underline{\alpha}_n) = h(p^{(7)},\underline{\alpha}_n) = \underline{A}_n$.
\end{proof}

\begin{proof}[Proof of Lemma \ref{lemma_fdr}]
Let $R$ be the set of rejected hypotheses and $V$ be the set of rejected true null hypotheses.
As $\underline{A}_n \subseteq R \subseteq \overline{A}_n$ for all $n \in \N$,
$|\underline{A}_n| \leq |R| \leq |\overline{A}_n|$.
Moreover, $\underline{V}_n \subseteq \underline{A}_n \subseteq R$ implies $|\underline{V}_n| \leq |V|$,
and as the difference in numbers of rejected true null hypotheses
in $V$ and $\overline{V}_n$ cannot differ by more than the number of undecided hypotheses
$|\overline{A}_n|-|\underline{A}_n|$ for any $n \in \N$,
$|V| \leq |\overline{V}_n| \leq |V| + (|\overline{A}_n|-|\underline{A}_n|)$.
\begin{enumerate}
  \item Using the above,
  $$\frac{|\underline{V}_n|}{|\underline{A}_n|} =
  \frac{|V|}{|R|} + \frac{|\underline{V}_n||R|-|\underline{A}_n||V|}{|\underline{A}_n||R|} \leq
  \frac{|V|}{|R|} + \frac{|V|}{|R|} \frac{|R|-|\underline{A}_n|}{|\underline{A}_n|} \leq
  \frac{|V|}{|R|} \frac{|R|}{|\underline{A}_n|} \leq
  \frac{|V|}{|R|} \frac{|\overline{A}_n|}{|\underline{A}_n|}$$
  for all $n \in \N$, thus
  $\E( |\underline{V}_s|/|\underline{A}_s| ) \leq \eta \E(|V|/|R|) = \eta \alpha$.
  \item Similarly,
  $$\frac{|\overline{V}_n|}{|\overline{A}_n|} \leq
  \frac{|V|+(|\overline{A}_n|-|\underline{A}_n|)}{|\overline{A}_n|} \leq
  \frac{|V|}{|R|}+\frac{|\overline{A}_n|-|\underline{A}_n|}{|\overline{A}_n|}$$
  for all $n \in \N$, thus
  $\E( |\overline{V}_t|/|\overline{A}_t| ) \leq \E(|V|/|R|) + \xi = \alpha + \xi$.
\end{enumerate}
\end{proof}

\subsection{Proofs of Section \ref{subsection_procedures_stepupstepdown}}
\label{appendix_proofs_subsection_framework_stepupstepdown}
The following two Lemmas will be needed for the proof of Lemma \ref{lemma_stepupstepdown}.
First, Lemma \ref{lemma_invariance} proves
three properties of step-up and step-down procedures
which are slightly stronger than the requirements
stated in Condition \ref{condition_h}.
For a vector $p=(p_1,\ldots,p_m)$,
we denote the rank of $p_i$ in the sorted sequence
$p_{(1)} \leq \ldots \leq p_{(m)}$ by $r_p(i)$.

\begin{lemma}
\label{lemma_invariance}
Let $p,q \in [0,1]^m$.
Let $h_u$ ($h_d$) be a step-up (step-down)
procedure defined through a threshold function
$\tau_\alpha$ satisfying Condition \ref{condition_threshold}.
\begin{enumerate}
  \item $h_u$ is monotonic.
  \label{lemma_item_monotonicity}
  \item If $q_i \leq \tau_\alpha(|h_u(p)|)$ $\forall i \in h_u(p)$
and $q_i = p_i$ $\forall i \notin h_u(p)$, then $h_u(p)=h_u(q)$.
  \label{lemma_item_invariance_rejection_area}
  \item If $q_i=p_i$ $\forall i \in h_u(p)$ and
$q_i > \tau_\alpha(r_p(i))$ $\forall i \notin h_u(p)$,
then $h_u(p) = h_u(q)$.\label{lemma_item_invariance_nonrejection_area}
  \item $h_d$ is monotonic.
  \label{lemma_item_monotonicity_d}
  \item If $q_i \leq \tau_\alpha(r_p(i))$ $\forall i \in h_d(p)$ and
$q_i=p_i$ $\forall i \notin h_d(p)$, then $h_d(p) = h_d(q)$.\label{lemma_item_invariance_rejection_area_d}
  \item If $q_i = p_i$ $\forall i \in h_d(p)$
and $q_i > \tau_\alpha(|h_d(p)|+1)$ $\forall i \notin h_d(p)$, then $h_d(p)=h_d(q)$.
  \label{lemma_item_invariance_nonrejection_area_d}
\end{enumerate}
\end{lemma}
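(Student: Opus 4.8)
The plan is to verify each of the six claims by working directly from the definitions \eqref{label_stepup} and \eqref{label_stepdown} together with Condition \ref{condition_threshold}. The unifying observation is that a step-up or step-down procedure only depends on how each $p_i$ compares with the critical values $\tau_\alpha(\cdot)$, so perturbations that keep all the relevant comparisons intact leave the rejection set unchanged. I would treat the step-up claims (items \ref{lemma_item_monotonicity}--\ref{lemma_item_invariance_nonrejection_area}) first and then note that the step-down claims (items \ref{lemma_item_monotonicity_d}--\ref{lemma_item_invariance_nonrejection_area_d}) follow by an entirely parallel argument with $\max$/$\min$ swapped.

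For monotonicity (items \ref{lemma_item_monotonicity} and \ref{lemma_item_monotonicity_d}): fix $\alpha \le \alpha'$ and $p \ge q$. Let $k$ be the largest index with $p_{(k)} \le \tau_\alpha(k)$, so the step-up threshold is $t := \tau_\alpha(k)$ (or $0$ if no such index exists). Since $q \le p$ componentwise we have $q_{(j)} \le p_{(j)}$ for every $j$, and since $\tau$ is non-decreasing in both arguments, $q_{(k)} \le p_{(k)} \le \tau_\alpha(k) \le \tau_{\alpha'}(k)$, so the analogous largest index $k'$ for $(q,\alpha')$ satisfies $k' \ge k$ and hence $\tau_{\alpha'}(k') \ge \tau_{\alpha'}(k) \ge \tau_\alpha(k) = t$. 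Therefore $\{i : p_i \le t\} \subseteq \{i : q_i \le \tau_{\alpha'}(k')\}$, which after checking that the left-hand set is exactly $h_u(p,\alpha)$ gives $h_u(p,\alpha) \subseteq h_u(q,\alpha')$. The mild subtlety here is the handling of ties in the order statistics, but since the rejection set is defined by comparing each $p_i$ to a single threshold value, ties cause no trouble.

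For the invariance claims (items \ref{lemma_item_invariance_rejection_area}, \ref{lemma_item_invariance_nonrejection_area} and their step-down analogues): in item \ref{lemma_item_invariance_rejection_area} we lower the rejected p-values below the cutoff $\tau_\alpha(|h_u(p)|)$ and leave the rest fixed. I would argue that this cannot change which indices are rejected: the non-rejected coordinates are untouched, and the rejected coordinates, being pushed even smaller, still lie at or below the relevant critical value, so the largest index $k$ with $q_{(k)} \le \tau_\alpha(k)$ is unchanged and the cutoff $\tau_\alpha(k)$ is unchanged; hence $h_u(q)=h_u(p)$. Item \ref{lemma_item_invariance_nonrejection_area} is dual: raising a non-rejected p-value $p_i$ to some $q_i > \tau_\alpha(r_p(i))$ keeps it above its own critical value, and one checks that the threshold function evaluated at the (possibly changed) ranks still produces the same cutoff because no previously non-rejected index can become rejected and no rejected index moves. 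The bookkeeping with ranks $r_p(i)$ versus $r_q(i)$ is the fiddly part, and I would dispose of it by noting that reordering only among non-rejected coordinates (which all exceed their critical values) cannot create a new index below its critical value.

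The main obstacle I anticipate is precisely this rank-tracking: after perturbing some coordinates the order statistics and hence the arguments fed into $\tau_\alpha$ can permute, so one must argue carefully that the quantity $\max\{p_{(j)} : p_{(j)} \le \tau_\alpha(j)\}$ (resp.\ $\min\{p_{(j)} : p_{(j)} > \tau_\alpha(j)\}$) is invariant under the allowed moves. The clean way to handle this is to fix the cutoff value produced by $h$ on $p$, show every allowed perturbation keeps each rejected coordinate at or below that value and each non-rejected coordinate strictly above its critical value, and then invoke monotonicity of $\tau_\alpha$ in its index argument to conclude the cutoff is stable; the claimed set equality then drops out. Once items \ref{lemma_item_invariance_rejection_area} and \ref{lemma_item_invariance_nonrejection_area} are in hand, the combination of the two (first lower the rejected coordinates, then raise the non-rejected ones) is exactly what is needed to derive the first part of Condition \ref{condition_h} in the proof of Lemma \ref{lemma_stepupstepdown}.
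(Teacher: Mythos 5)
Your overall route is the same as the paper's: work directly from \eqref{label_stepup}--\eqref{label_stepdown}, identify the rejection set with a comparison against a single cutoff, and track how ranks move under the allowed perturbations. One piece is genuinely different and, once the details are filled in, cleaner: for monotonicity you compare $(p,\alpha)$ and $(q,\alpha')$ globally via the fact that $q\leq p$ componentwise implies $q_{(j)}\leq p_{(j)}$ for every $j$, so the maximal index $k$ with $p_{(k)}\leq\tau_\alpha(k)$ can only grow, whereas the paper reduces to changing one coordinate at a time and does a two-case analysis on the new rank of that coordinate. Your version is valid provided you also prove the identification $h_u(p,\alpha)=\{i:p_i\leq\tau_\alpha(k)\}$ (which you flag; it follows because any $i$ with $p_{(k)}<p_i\leq\tau_\alpha(k)$ would force $p_{(k+1)}\leq\tau_\alpha(k+1)$, contradicting maximality of $k$) and handle the trivial case $k=0$.

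The gap is in items \ref{lemma_item_invariance_nonrejection_area} and \ref{lemma_item_invariance_nonrejection_area_d}. There the hypothesis is \emph{not} that non-rejected p-values are ``raised'': $q_i$ may be smaller than $p_i$, subject only to $q_i>\tau_\alpha(r_p(i))$, so the non-rejected coordinates can be arbitrarily permuted in rank among positions $k+1,\ldots,m$. Your disposal of this --- ``reordering only among non-rejected coordinates (which all exceed their critical values) cannot create a new index below its critical value'' --- is precisely the statement to be proved, not an argument for it: each $q_i$ is assumed to exceed the critical value at its \emph{old} rank $\tau_\alpha(r_p(i))$, and you must show it exceeds the critical value at its \emph{new} rank $\tau_\alpha(r_q(i))$. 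This requires the case split the paper gives: if $r_q(i)\leq r_p(i)$, monotonicity of $\tau_\alpha$ in $i$ gives $q_i>\tau_\alpha(r_p(i))\geq\tau_\alpha(r_q(i))$; if $r_q(i)>r_p(i)$, then $q_i$ sits at the rank formerly occupied by some other non-rejected index $j$ with $r_p(j)=r_q(i)$ and $q_i\geq q_j>\tau_\alpha(r_p(j))=\tau_\alpha(r_q(i))$. Without this (or an equivalent induction on adjacent transpositions), the claim that no non-rejected index drops below its critical value after reordering is unsupported. Items \ref{lemma_item_invariance_rejection_area} and \ref{lemma_item_invariance_rejection_area_d} are safe under your plan, because there the untouched non-rejected coordinates retain their ranks (every rejected $q_i\leq\tau_\alpha(k)$ stays below every non-rejected $p_i>\tau_\alpha(k)$), so no rank permutation occurs; but note that there too the rejected values are replaced by arbitrary values below $\tau_\alpha(|h_u(p)|)$, not merely ``pushed smaller,'' so the cutoff $q_{(k)}$ changes and you need the observation that $\{i:q_i\leq q_{(k)}\}=\{i:q_i\leq\tau_\alpha(k)\}$ to conclude.
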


\begin{proof}
As $h_u$ and $h_d$ are invariant to permutations,
we may assume $p_1 \leq \cdots \leq p_m$.

\ref{lemma_item_monotonicity}.
Let $p \in [0,1]^m$ and $i \in \{ 1,\ldots,m \}$.
It suffices to show that $h_u(p) \supseteq h_u(q)$ for any
$q \in [0,1]^m$ given by
$q_j=p_j$ $\forall j \neq i$ and $q_i > p_i$.

Let $k := |h_u(p)|$ be the largest rejected index.
We need to show that $j \notin h_u(q)$ $\forall j \geq k+1$.
Let $\alpha$ be fixed.

Case 1: $r_{q}(i) \leq k$. This implies $r_{q}(j)=j$ $\forall j \geq k+1$ and hence
$q_j = p_j > \tau_\alpha(j) = \tau_\alpha(r_{q}(j))$.
Therefore, $j \notin h_u(q)$ $\forall j \geq k+1$.

Case 2: $r_{q}(i) \geq k+1$. Let $j \geq k+1$, $j \neq i$.
Then the rank of the $j$th p-value can only drop by one when $p_i$
is replaced by $q_i$, i.e.\ $r_q(j) \in \{ j-1,j \}$.
Thus $q_j = p_j > \tau_\alpha(j) \geq \tau_\alpha(r_q(j))$ by Condition \ref{condition_threshold}
(using that $\tau_\alpha(i)$ is non-decreasing in $i$).
Furthermore, as $r_{q}(i) \geq k+1$, $q_i$
takes the position of the former $p_{r_{q}(i)}$ in the ordered sequence of
values from $q$, i.e.\ $q_i \geq p_{r_{q}(i)}$.
Hence, $r_{q}(i) \notin h_u(p)$
because of $r_{q}(i) \geq k+1$ and thus
$q_i \geq p_{r_{q}(i)} > \tau_\alpha(r_{q}(i))$.
Therefore, $\{ k+1, \ldots, m \} \cup \{ i \} \notin h_u(q)$.
This proves the monotonicity in the first argument of $h_u$.

The monotonicity in the second argument of $h$ is immediate as
$p_i \leq \max \{ p_{(j)}: p_{(j)} \leq \tau_\alpha(j) \}$
for all $i \in h_u(p,\alpha)$.
On Condition \ref{condition_threshold},
and using that $\tau_\alpha$ is non-decreasing in $\alpha$,
$\alpha \leq \alpha'$ implies $\tau_\alpha(j) \leq \tau_{\alpha'}(j)$ $\forall j$,
hence $i \in h_u(p,\alpha')$.
This proves \ref{lemma_item_monotonicity}.

\ref{lemma_item_invariance_rejection_area}.
All $i \notin h_u(p)$ satisfy $p_i > \tau_\alpha(r_p(i)) > \tau_\alpha(|h_u(p)|)$
whereas by assumption,
$q_i \leq \tau_\alpha(|h_u(p)|)$ $\forall i \in h_u(p)$.
Hence, using $q_i = p_i$ $\forall i \notin h_u(p)$, it follows that
$r_q(i)=r_p(i)$ $\forall i \notin h_u(p)$. Thus,
$q_i = p_i > \tau_\alpha(r_p(i)) = \tau_\alpha(r_q(i))$
for all $i \notin h_u(p)$. Hence $h_u(p)^c \subseteq h_u(q)^c$.

Conversely, define $\tilde{q} := \max \{ q_i: i \in h_u(p) \}$. As
$\tilde{q} \leq \tau_\alpha(|h_u(p)|) < q_i$
for all $i \notin h_u(p)$ and as there are precisely $|h_u(p)|$ values
$q_i \leq \tilde{q}$,
the rank of $\tilde{q}$ in $q$ is precisely $|h_u(p)|$.
As $q_i \leq \tilde{q} \leq \tau_\alpha(|h_u(p)|)$
$\forall i \in h_u(p)$,
all $\{ q_i \}_{i \in h_u(p)}$ are rejected, so $h_u(p) \subseteq h_u(q)$.
This proves \ref{lemma_item_invariance_rejection_area}.

\ref{lemma_item_invariance_nonrejection_area}.
As $q_i = p_i$ for all $i \in h_u(p)$, have $h_u(p) \subseteq h_u(q)$.

Let $i \notin h_u(p)$. If $r_q(i) \leq r_p(i)$, then
$q_i > \tau_\alpha(r_p(i)) \geq \tau_\alpha(r_q(i))$ by Condition \ref{condition_threshold}.
If $r_q(i) > r_p(i)$, $q_i$ replaces a
$q_j > \tau_\alpha(r_p(j))$ at rank $r_p(j)$
in the sorted sequence of $q$, hence $r_q(i)=r_p(j)$ and
$q_i \geq q_j > \tau_\alpha(r_p(j)) = \tau_\alpha(r_q(i))$.
Thus $q_i > \tau_\alpha(r_q(i))$ $\forall i \notin h_u(p)$,
which implies $h_u(p)^c \subseteq h_u(q)^c$.
This proves \ref{lemma_item_invariance_nonrejection_area}.

In a similar fashion,
\ref{lemma_item_monotonicity_d}.,
\ref{lemma_item_invariance_rejection_area_d}.
and \ref{lemma_item_invariance_nonrejection_area_d}.\
can be proven for step-down procedures $h_d$.
\end{proof}

For step-up procedures $h_u$, part \ref{lemma_item_invariance_rejection_area}.\ of
Lemma \ref{lemma_invariance}
shows that p-values of rejected hypotheses can be
increased up to $\tau_\alpha(|h_u(p)|)$,
the threshold evaluated at the last rejected hypothesis,
without affecting the result of $h_u$.
Part \ref{lemma_item_invariance_nonrejection_area}.\ of
Lemma \ref{lemma_invariance}
shows that $h_u$ is invariant
if p-values in the non-rejection area
are replaced by arbitrary values above the threshold.

Similarly, step-down procedures $h_d$ are invariant if p-values of rejected hypotheses are replaced
by arbitrary values below the threshold
(part \ref{lemma_item_invariance_rejection_area_d}.)
or p-values of non-rejected hypotheses are replaced by arbitrary
values above $\tau_\alpha(|h_d(p)|+1)$, the threshold evaluated
at the first non-rejected hypothesis
(part \ref{lemma_item_invariance_nonrejection_area_d}.).

The following Lemma \ref{lemma_invariance_delta} will also be needed
for the proof of Lemma \ref{lemma_stepupstepdown}.
In the following, $\| \tau_\alpha \|_\infty$ shall denote the maximal value
attained by $\tau_\alpha: \{ 1,\ldots,m \} \rightarrow [0,1]$ on $\{ 1,\ldots,m \}$.

\begin{lemma}
\label{lemma_invariance_delta}
Let $h$ stand for $h_u$ or $h_d$.
If $p^\ast \in [0,1]^m$, $\alpha^\ast>0$ with $p^\ast_{(i)} \neq \tau_{\alpha^\ast}(i)$
$\forall i \in \{ 1,\ldots,m \}$, then there exists $\delta>0$ such that
$p \in [0,1]^m$, $\tau_\alpha: \{ 1,\ldots,m \} \rightarrow [0,1]$ and
$\| p^\ast-p \| \vee \|\tau_{\alpha^\ast}-\tau_\alpha\|_\infty < \delta$
$\forall i \in \{ 1,\ldots,m \}$
imply $h(p,\alpha)=h(p^\ast,\alpha^\ast)$.
\end{lemma}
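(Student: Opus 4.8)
The plan is to reduce the claim to a pair of ``stability'' statements: first, that the \emph{rank} information of $p^\ast$ is preserved under small perturbations, and second, that each comparison $p_{(i)} \lessgtr \tau_\alpha(i)$ retains its sign under small perturbations, so that the combinatorial recipe \eqref{label_stepup} (resp.\ \eqref{label_stepdown}) produces the same rejection set. Since $h_u$ and $h_d$ are permutation-invariant, I would first fix an ordering and work with the order statistics. Let $p^\ast_{(1)} \le \cdots \le p^\ast_{(m)}$. Because the hypothesis $p^\ast_{(i)} \neq \tau_{\alpha^\ast}(i)$ holds for every $i$, and there are only finitely many $i$, I can set
\[
\delta_1 = \tfrac12 \min_{1 \le i \le m} |p^\ast_{(i)} - \tau_{\alpha^\ast}(i)| > 0,
\]
and I would also want a $\delta_2$ controlling the gaps between distinct values among the $p^\ast_{(i)}$ so that strict inequalities $p^\ast_{(i)} < p^\ast_{(j)}$ are not destroyed; ties in $p^\ast$ need a moment's care (see below). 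Then take $\delta$ to be a suitably small fraction of $\min(\delta_1,\delta_2)$, small enough that moving every coordinate by less than $\delta$ in Euclidean norm (hence in sup-norm) and moving $\tau_{\alpha^\ast}$ by less than $\delta$ in $\|\cdot\|_\infty$ changes each quantity $p_{(i)} - \tau_\alpha(i)$ by strictly less than $\delta_1$, hence preserves its sign.

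The key steps, in order, are: (1) show that if $\|p - p^\ast\| < \delta$ then the order statistics satisfy $|p_{(i)} - p^\ast_{(i)}| < \delta$ for each $i$ --- this is the standard $1$-Lipschitz property of order statistics with respect to the sup-norm, and a fortiori the Euclidean norm. (2) Combine this with $\|\tau_{\alpha^\ast} - \tau_\alpha\|_\infty < \delta$ to get $|(p_{(i)} - \tau_\alpha(i)) - (p^\ast_{(i)} - \tau_{\alpha^\ast}(i))| < 2\delta$ for each $i$; choosing $\delta \le \delta_1/2$ makes this $< \delta_1$, so $\mathrm{sign}(p_{(i)} - \tau_\alpha(i)) = \mathrm{sign}(p^\ast_{(i)} - \tau_{\alpha^\ast}(i))$ (and in particular neither is zero) for all $i$. (3) Conclude that the index set $\{ j : p_{(j)} \le \tau_\alpha(j) \}$ equals $\{ j : p^\ast_{(j)} \le \tau_{\alpha^\ast}(j) \}$, so $\max\{ p_{(j)} : p_{(j)} \le \tau_\alpha(j)\}$ picks out the same \emph{position} in the ordered list as it does for $p^\ast$; the only subtlety is translating ``$p_i \le$ that threshold'' back to the unsorted vector, which is handled by observing that the set of rejected hypotheses is determined by a cutoff rank $k$ (for step-up, the largest $j$ with $p_{(j)} \le \tau_\alpha(j)$; for step-down, one less than the smallest $j$ with $p_{(j)} > \tau_\alpha(j)$), and that $\delta < \delta_2$ guarantees the coordinates of $p$ that lie at or below rank $k$ are exactly the coordinates of $p^\ast$ at or below rank $k$.

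The main obstacle I anticipate is the bookkeeping around \emph{ties} in $p^\ast$. When $p^\ast$ has equal coordinates, the assignment of ranks is ambiguous, and a small perturbation can permute which tied coordinate receives which rank --- but since tied coordinates are (nearly) equal this does not change which \emph{set} of hypotheses ends up rejected, provided the perturbation is small relative to the gaps between \emph{distinct} values. I would make this precise by noting that the rejection set depends on $p$ only through the multiset of values together with, for each coordinate, whether it is $\le$ or $>$ its critical value at its rank; for tied coordinates the critical value may differ across the tie block (as $\tau_\alpha$ is non-decreasing in $i$), so I should compare $p^\ast_{(i)}$ against $\tau_{\alpha^\ast}(i)$ at \emph{every} rank $i$ in the block, which is exactly what the hypothesis $p^\ast_{(i)} \neq \tau_{\alpha^\ast}(i)$ for all $i$ gives us. A clean way to package this is to argue: the cutoff rank $k^\ast$ for $p^\ast$ satisfies $p^\ast_{(k^\ast)} \le \tau_{\alpha^\ast}(k^\ast) - 2\delta_1$ (strictly below, using monotonicity of $\tau$ in $i$ for the step-down case) and, if $k^\ast < m$, the next relevant order statistic sits strictly above its critical value by at least $2\delta_1$; hence for any $p,\tau_\alpha$ within $\delta$ the cutoff rank is still $k^\ast$, and the rejected coordinates are $\{ i : r_p(i) \le k^\ast \}$, which coincides with $\{ i : r_{p^\ast}(i) \le k^\ast\} = h(p^\ast,\alpha^\ast)$ once $\delta$ is below the smallest gap between distinct values of $p^\ast$. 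Everything else is routine.
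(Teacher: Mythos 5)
Your proposal is correct and takes essentially the same route as the paper's proof: both choose $\delta$ as (a fraction of) the minimum over the gaps between distinct order statistics of $p^\ast$ and the distances $|p^\ast_{(i)} - \tau_{\alpha^\ast}(i)|$, then show that the ranks and the sign of every comparison $p_{(i)} - \tau_\alpha(i)$ are preserved under the perturbation. Your explicit handling of ties and of the passage from the order statistics back to the rejection set is somewhat more careful than the paper's terse version, but it is the same argument.
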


\begin{proof}
Let
$$\delta' = \min \left(
\left\{ \frac{p_i^\ast-p_{i-1}^\ast}{2}: p_{i-1}^\ast<p_i^\ast \right\}_{i=1,\ldots,m}
\cup
\{ |p_i^\ast - \tau_{\alpha^\ast}(i)| \}_{i=1,\ldots,m}
\right)$$
and let $\delta = \delta'/2$.

By assumption, $\| p-p^\ast \| < \delta < \delta'$, hence
$p_{i-1} < p_{i-1}^\ast + \delta' \leq p_i^\ast - \delta' < p_i$.
This means that $p_i^\ast$ and $p_i$ have the same ranks in $p^\ast$ and $p$,
respectively.

Moreover,
$|p_i^\ast - \tau_{\alpha^\ast}(i)| \leq
|p_i^\ast - \tau_\alpha(i)| + |\tau_\alpha(i) - \tau_{\alpha^\ast}(i)| \leq
|p_i^\ast - \tau_\alpha(i)| + \|\tau_\alpha-\tau_{\alpha^\ast}\|_\infty \leq
|p_i^\ast - \tau_\alpha(i)| + \delta$.
Hence $2 \delta = \delta' \leq |p_i^\ast - \tau_{\alpha^\ast}(i)| \leq
|p_i^\ast - \tau_\alpha(i)| + \delta$, meaning
that $\delta \leq |p_i^\ast - \tau_\alpha(i)|$ for $i \in \{ 1,\ldots,m \}$.

So $|p_i^\ast - p_i| < \delta \leq |p_i^\ast - \tau_\alpha(i)|$,
hence $p_i$ and $p_i^\ast$ lie on the same side of the testing threshold.
\end{proof}

\begin{proof}[Proof of Lemma \ref{lemma_stepupstepdown}]
1. The monotonicity of $h_u$ and $h_d$ follows from
Lemma \ref{lemma_invariance}
(part \ref{lemma_item_monotonicity}.) and
(part \ref{lemma_item_monotonicity_d}.), respectively.

2. To prove that $h_u$ satisfies the first part of
Condition \ref{condition_h},
it suffices to show that for $p,q \in [0,1]^m$, both
$q_i \leq p_i$ $\forall i \in h_u(p)$ and $q_i = p_i$ $\forall i \notin h_u(p)$
as well as
$q_i = p_i$ $\forall i \in h_u(p)$ and $q_i \geq p_i$ $\forall i \notin h_u(p)$
imply $h_u(p) = h_u(q)$.

Indeed, let $p,q \in [0,1]^m$ be such that
$q_i \leq p_i$ $\forall i \in h_u(p)$ and $q_i = p_i$ $\forall i \notin h_u(p)$.
We have $p_i \leq \tau_\alpha(|h_u(p)|)$ $\forall i \in h_u(p)$,
thus $q_i \leq p_i \leq \tau_\alpha(|h_u(p)|)$
$\forall i \in h_u(p)$ and $h_u(p) = h_u(q)$ by
Lemma \ref{lemma_invariance} (part \ref{lemma_item_invariance_rejection_area}.).

Similarly, let $p,q \in [0,1]^m$ be such that
$q_i = p_i$ $\forall i \in h_u(p)$ and $q_i \geq p_i$ $\forall i \notin h_u(p)$.
Using $p_i > \tau_\alpha(r_p(i))$ $\forall i \notin h_u(p)$,
it instantly follows that $q_i \geq p_i > \tau_\alpha(r_p(i))$ $\forall i \notin h_u(p)$
and thus $h_u(p) = h_u(q)$
by Lemma \ref{lemma_invariance} (part \ref{lemma_item_invariance_nonrejection_area}.).

To prove that $h_d$ satisfies the first part of Condition \ref{condition_h},
it equally suffices to show that for $p,q \in [0,1]^m$, both
$q_i \leq p_i$ $\forall i \in h_d(p)$ and $q_i = p_i$ $\forall i \notin h_d(p)$
as well as
$q_i = p_i$ $\forall i \in h_d(p)$ and $q_i \geq p_i$ $\forall i \notin h_d(p)$
imply $h_d(p) = h_d(q)$.

Indeed, let $p,q \in [0,1]^m$ be such that
$q_i \leq p_i$ $\forall i \in h_d(p)$ and $q_i = p_i$ $\forall i \notin h_d(p)$.
Using $p_i \leq \tau_\alpha(r_p(i))$ $\forall i \in h_d(p)$,
it immediately follows that $q_i \leq p_i \leq \tau_\alpha(r_p(i))$ $\forall i \in h_d(p)$
and thus $h_d(p) = h_d(q)$
by Lemma \ref{lemma_invariance} (part \ref{lemma_item_invariance_rejection_area_d}.).

Similarly, let $p,q \in [0,1]^m$ be such that
$q_i = p_i$ $\forall i \in h_d(p)$ and $q_i \geq p_i$ $\forall i \notin h_d(p)$.
We have $p_i > \tau_\alpha(|h_d(p)|+1)$ $\forall i \notin h_d(p)$,
thus $q_i \geq p_i > \tau_\alpha(|h_d(p)|+1)$
$\forall i \notin h_d(p)$ and $h_d(p) = h_d(q)$
by Lemma \ref{lemma_invariance} (part \ref{lemma_item_invariance_nonrejection_area_d}.).

3. As $\tau_\alpha(i)$ is continuous in $\alpha$ $\forall i \in \{ 1,\ldots,m \}$
by Condition \ref{condition_threshold},
for each $\epsilon_i>0$
there exists a $\delta_i>0$ such that $|\alpha^\ast-\alpha|<\delta_i$
implies $|\tau_{\alpha^\ast}(i)-\tau_{\alpha}(i)|<\epsilon_i$.
Applying continuity to $\epsilon_i=\delta$ yields a $\delta_i$
for each $i \in \{ 1,\ldots,m \}$,
where $\delta>0$ is given by Lemma \ref{lemma_invariance_delta}.
The second part of Condition \ref{condition_h} then follows for all
$p \in [0,1]^m$ and $\alpha \in [0,1]$ satisfying
$\| p-p^\ast \| \vee |\alpha-\alpha^\ast| < \min \{ \delta,\delta_1,\ldots,\delta_m \}$.
\end{proof}

\section{The Hommel procedure is not well-behaved}
\label{appendix_hommel}
The \cite{Hommel1988} procedure determines
the largest index $k$ satisfying $p_{(m-k+j)} > j \alpha/k$
for all $j = 1,\ldots,k$ and then rejects all the $H_{0i}$
with $p_i \leq \alpha/k$.
If no such $k$ exists, all hypotheses are rejected.

The \cite{Hommel1988} procedure $h(p,\alpha)$
is not a classical step-up or step-down procedure.
Given $p$, determining the index $k$ corresponds to
applying $m$ step-up procedures $h_j$, $j \in \{ 1,\ldots,m \}$,
to $P_j=( p_{(m-j+1)},\ldots,p_{(m)} )$ using the threshold
functions $\tau_j(i) = i\alpha/j$, where $i \in \{ 1,\ldots,j \}$.
Once $k_p=\max \{ j: h_j(P_j)=\emptyset \}$
is determined, rejections are calculated by applying
the \cite{Bonferroni1936} correction
(defined in Section \ref{subsection_procedures_examples})
at threshold $\alpha/k_p$
to all p-values $p$, i.e.\ $h(p,\alpha)=h_\text{Bonferroni}(p,m\alpha/k_p)$.

The \cite{Hommel1988} procedure satisfies the first and the third part of
Condition \ref{condition_h}.
However, for $q_i \geq p_i$ $\forall i \notin h(p,\alpha)$,
the second part of Condition \ref{condition_h} is not satisfied.
Consider $p=[\alpha/3+\epsilon, \alpha/2+\epsilon, 1]$,
where $0 < \alpha < 1$ and $0 < \epsilon \leq \alpha/6$.
Then $h_1(P_1)=\emptyset$, $h_2(P_2)=\emptyset$,
$h_3(P_3) = \{ 1,2 \}$, so $k_p=2$.
Therefore, $h(p,\alpha)=\{ 1 \}$.
Increasing $p_2$ to $p_2=2\alpha/3+\epsilon$ yields
$h_1(P_1)=h_2(P_2)=h_3(P_3)=\emptyset$, hence $k_p=3$ and $h(p,\alpha)=\emptyset$.

\end{document}